\newtheorem{definition}{Definition}
\newtheorem{theorem}{Theorem}
\newtheorem{example}{Example}
\newcommand\R{\mathbb{R}}
\begin{document}

\title{Ehrenfest dynamics is purity non-preserving:  a necessary ingredient for 
decoherence}

\author{J. L. Alonso}
\affiliation{Departamento de F{\'{\i}}sica Te\'orica, Universidad de Zaragoza, 
Pedro Cerbuna 12, E-50009 Zaragoza, Spain}
\affiliation{Instituto de Biocomputaci\'on y F{\'{\i}}sica de Sistemas 
Complejos (BIFI), Universidad de Zaragoza, Mariano Esquillor s/n, Edificio 
I+D, E-50018 Zaragoza, Spain}
\affiliation{Unidad Asociada IQFR-BIFI,Universidad de Zaragoza, Mariano Esquillor s/n, Edificio 
I+D, E-50018 Zaragoza, Spain}
\author{J. Clemente-Gallardo}
  \affiliation{Departamento de F{\'{\i}}sica Te\'orica, Universidad de Zaragoza, 
Pedro Cerbuna 12, E-50009 Zaragoza, Spain}
\affiliation{Instituto de Biocomputaci\'on y F{\'{\i}}sica de Sistemas 
Complejos  (BIFI), Universidad de Zaragoza, Mariano Esquillor s/n, Edificio 
I+D, E-50018 Zaragoza, Spain}
\affiliation{Unidad Asociada IQFR-BIFI,Universidad de Zaragoza, Mariano Esquillor s/n, Edificio 
I+D, E-50018 Zaragoza, Spain}
\affiliation{Zaragoza Scientific Center for Advanced Modeling (ZCAM), Universidad de Zaragoza, Mariano Esquillor s/n, Edificio I+D, E-50018 Zaragoza, Spain}
\author{J. C. Cuch\'{\i}}
\affiliation{Departament d'Enginyeria Agroforestal, ETSEA-Universitat de 
Lleida, Av. Alcalde Rovira Roure 191, 25198 Lleida, Spain}
\author{P. Echenique}
\affiliation{Instituto de Qu\'{\i}mica F\'{\i}sica Rocasolano, CSIC, Serrano 
119, E-28006 Madrid, Spain}
\affiliation{Departamento de F{\'{\i}}sica Te\'orica, Universidad de Zaragoza, 
Pedro Cerbuna 12, E-50009 Zaragoza, Spain}
\affiliation{Instituto de Biocomputaci\'on y F{\'{\i}}sica de Sistemas 
Complejos (BIFI), Universidad de Zaragoza, Mariano Esquillor s/n, Edificio
I+D, E-50018 Zaragoza, Spain}
\affiliation{Zaragoza Scientific Center for Advanced Modeling (ZCAM), Universidad de Zaragoza, Mariano Esquillor s/n, Edificio I+D, E-50018 Zaragoza, Spain}
\affiliation{Unidad Asociada IQFR-BIFI,Universidad de Zaragoza, Mariano Esquillor s/n, Edificio 
I+D, E-50018 Zaragoza, Spain}
\author{F. Falceto}
\affiliation{Departamento de F{\'{\i}}sica Te\'orica, Universidad de Zaragoza, 
Pedro Cerbuna 12, E-50009 Zaragoza, Spain}
\affiliation{Instituto de Biocomputaci\'on y F{\'{\i}}sica de Sistemas 
Complejos (BIFI), Universidad de Zaragoza, Mariano Esquillor s/n, Edificio 
I+D, E-50018 Zaragoza, Spain}

\begin{abstract}
We discuss the evolution of purity in mixed quantum/classical approaches to
electronic nonadiabatic dynamics in the context of the Ehrenfest model.
 As it is impossible to exactly determine initial conditions for a
realistic system, we choose to work in the statistical Ehrenfest formalism
that we introduced in Ref.~\onlinecite{Alonso2011JPA}. From it, we develop a
new framework to determine exactly the change in the purity of the quantum
subsystem along the evolution of a statistical Ehrenfest system. In a simple
case, we verify how and to which extent Ehrenfest statistical dynamics makes a
system with more than one classical trajectory and an initial quantum pure
state become a quantum mixed one. We prove this numerically
showing how the evolution of purity depends on time, on the dimension of the
quantum state space $D$, and on the number of classical trajectories $N$ of
the initial distribution. The results in this work open new perspectives for
studying decoherence with Ehrenfest dynamics.
\end{abstract}
\maketitle

\section{Introduction}

The Schr\"odinger equation for a combined system of electrons and nuclei
enables us to predict most of the chemistry and molecular physics that
surrounds us, including biophysical processes of great complexity.
Unfortunately, this task is not possible in general, and approximations need
to be made; one of the most important and successful being the classical
approximation for a number of the particles. Mixed quantum-classical dynamical
(MQCD) models are therefore necessary and widely used.

We could say that, typically, the technique used to build MQCD models is a
partial `deconstruction' of the quantum mechanics (QM) of the total system
(electrons and nuclei) followed by a `reconstruction' that tries to recover
the essential properties of the total Schr\"odinger equation lost in the
deconstruction process. It is unrealistic to expect the reconstructed theory
has the same predictive power as the Schr\"odinger equation, so the
reconstructed theory will apply with enough accuracy only to a subset of
systems and questions; a subset whose boundaries are difficult to predict a
priori. In the literature, there are at least two common levels of
deconstruction, one further away from the total Schr\"odinger equation for
electrons and nuclei, called Born-Oppenheimer molecular dynamics (BOMD), where
electrons are assumed to remain in the ground state for all times, and another
one closer to it, called Ehrenfest dynamics (ED), where nuclei are still
classical (as in BOMD) but the electrons are allowed to populate excited
states (some misleading notation used in the literature on ED is
clarified in Section 2 of Ref. \onlinecite{Andrade2009JCTC}).

In J. C. Tully's surface hopping methods,\cite{Tully1990JCP} for example, the
deconstruction goes to BOMD and the reconstruction proceeds by allowing the
system to perform certain specially designed stochastic jumps between
adiabatic states.

In the decay of mixing formalism by D. G. Truhlar and
coworkers,\cite{Zhu2005JCTC} the deconstruction stops at the ED and the
reconstruction is developed by adding decoherence to it. This has been shown
to be more accurate than surface hopping methods for non-Born-Oppenheimer
collisions.

 When considered for a single system, ED  is a fully
coherent semiclassical method, and hence purity preserving. As decoherence
must be a property of any realistic
model,\cite{Tully1990JCP,Zhu2005JCTC,Truhlar2007Book} many MQCD models have
been reconstructed in order to produce electronic decoherence. See, for
example, Refs.~\color{black}\onlinecite{Truhlar2007Book,
Zhu2005JCTC, Prezhdo1999JCP, Subotnik2010JCP, Horsfield2006RPP, Tully1990JCP,
Tully1990b, Tully1998Book,
Landry2011JCP,Bedard2005JCP,SunWangMiller:1998,SubShen:2011}, that range from
one of the most classic in this matter \cite{Tully1990JCP} to one of the most
recent.\cite{SubShen:2011}. In Ref.~\onlinecite{SubShen:2011}, we
can find a recent study of decoherence in the context of surface hopping and
an important conclusion: averaging over a swarm of initial conditions,
decoherence can be measured; but the method cannot capture all the observed
effects (for example, the averaging is not enough to capture the physics of
wave packet bifurcations on multiple surfaces). In the case of decay of mixing
formalisms, where the trajectories in the swarm are considered as independent,
decoherence phenomena are incorporated algorithmically (see Eq.~18 of
Ref.~\onlinecite{Truhlar2007Book}).

 In this paper we study the problem from a different perspective.
First of all, we consider a complete statistical description of ED, which was
introduced in Ref.~\onlinecite{Alonso2011JPA} in full detail. Based on that
construction, we develop a description of Quantum Statistical Mechanics which
can be adapted to mixed quantum-classical systems in a straightforward but
rigorous manner. This allows us to consider, in a simple way, the evolution of
the purity of the quantum subsystem of our Ehrenfest model. We prove thus
that, while a single Ehrenfest system evolves preserving the purity of the
quantum state, the behavior changes dramatically when a statistical
distribution is considered and, in general, it introduces a change in the
purity of the quantum system along the trajectory. Therefore, we can claim
that our general statistical ED formalism is purity non-preserving; a property which
always accompanies decoherence phenomena (see for example Sec.~3.5 in
Ref.~\onlinecite{Schlosshauer2007}). This work opens thus a new line of
research in the direction of taking into account decoherence phenomena
in ED.
Of course, those ingredients  still missing for a proper description of
decoherence in our Ehrenfest statistical formalism could be later added in a
reconstruction process similar to the ones mentioned before, but this time
starting from a, presumably better, purity non-preserving
dynamics

A full study of the decoherence process  is a very complex task which includes deep quantum
theoretic concepts as the measurement problem and the interpretations
of QM
 (see, for example, Refs.~\onlinecite{Zurek2003RMP,
Schlosshauer2007} for a general discussion, and Ref.~\onlinecite{Zhu2005JCTC}
for the analysis of the decoherence phenomenon on molecular systems).
 In this paper we voluntarily restrict ourselves to a simple
property reflecting the decoherence phenomenon. Namely, the change in the
`degree of mixture' of the quantum state in a MQCD model, as quantified by the
purity $\mathrm{Tr}\rho^2$. As mentioned, we shall see in Sections
\ref{sec:purity-non-pres} and \ref{sec:ejemplo} 
that ED provides a framework where this change takes place. The actual
relation with the electronic decoherence in molecular systems requires a much
more involved analysis which will be developed in the future.

Besides the incomplete description of decoherence, usual
approaches to ED have  also been often criticized on the basis
that it does not yield the Boltzmann equilibrium distribution for the
electrons
exactly.\cite{Mauri1993EL,Parandekar2006JCTC,Parandekar2005JCP,Schmidt2008JCP,
Bastida2007JCP} The lack of this property, which we agree is desirable, is
however not enough to rule out ED for all applications, as we recently argued.
\cite{Alonso2010NJP, Alonso2012}

The structure of the paper is as follows: Sections
from~\ref{sec:notion-decoherence} to~\ref{sec:purity-non-pres} introduce the
mathematical formalism and the relevant definitions, which are then put into
practice in the numerical example in Sec.~\ref{sec:ejemplo}.
Sec.~\ref{sec:notion-decoherence} reviews the notion of purity in QM and
proves the well known fact that ED preserves the purity of the quantum
subsystem when we consider the evolution of a single trajectory from perfectly
determined initial conditions. Sec.~\ref{sec:geom-quant-stat} presents a very
brief summary of the formulation of geometric QM (see
\onlinecite{Alonso2011JPA} for a more careful presentation) and it provides an
analogous formulation of a quantum statistical system within the same
framework. In particular, a suitable formulation of the purity of a quantum
system is introduced. Sec.~\ref{sec:purity-non-pres} presents the main
contribution of the paper: first, we review the geometrical formulation of ED
and its associated statistical equilibrium introduced in
\onlinecite{Alonso2011JPA}. Then, we adapt the tools introduced in the
previous section in order to be able to study the evolution of the purity of
the quantum subsystem in a suitable way, and to show that ED is purity
non-preserving. The use of the geometrical formalism, as we will see in what
follows, allows to perform a very direct analysis of the problem. In
Sec.~\ref{sec:ejemplo}, we numerically illustrate the change in purity
produced by ED using a very simple but extremely useful example: a statistical
system defined by a pure quantum state and an ensemble of initial conditions
of the classical subsystem. Such a system has been used in the literature as a
natural framework for molecular dynamics (see for example
Refs.~\onlinecite{Tully1990b,Tully1998Book,Bastida2007JCP}). We use it as the
simplest nontrivial Ehrenfest statistical system where we can show how the
purity of the quantum part of the system evolves in time depending on the
coupling between the classical and quantum systems, the initial momentum of
the classical particles, the dimension of the quantum state space and the
number of trajectories considered in the initial conditions. In
Sec.~\ref{sec:concl-future-work} we present our conclusions and our plans for
future works.

\section{Purity}
\label{sec:notion-decoherence}

\subsection{Purity preservation in quantum mechanics}
\label{sec:noti-purity-quant}

Given a Hilbert space $\mathcal{H}$, we shall call {\bf density states} to the
elements $\rho$ obtained as convex combinations of rank-one projectors
$\rho=\{ \rho_1, \ldots, \rho_k\}$, each element satisfying
$$
  \rho_j^2=\rho_j,\quad \rho_j^+=\rho_j,   \quad \mathrm{Tr}\rho_j=1, 
  \quad j=1,\ldots,k,
$$ 
with a probability vector, $p := (p_1, \ldots, p_k)$ with
$\sum_jp_j=1$ and $p_j\geq 0$, $\forall j$. The expression of a general
density state is then
$$ 
  \rho=\sum_jp_j\rho_j.  
$$ 
The evaluation of some observable $A$ on this state is given by
\begin{equation}   
\langle A\rangle=\sum_jp_j \mathrm{Tr}(\rho_j A)=\mathrm{Tr}(\rho A).  
 \label{eq:aver_mix}
\end{equation}

The state of the quantum system is said to be \textbf{pure} if the density
matrix which represents it is a rank-one projector, i.e., if the convex
combination above contains only one term. If this property does not hold, the
system is said to be in a \textbf{mixed} state, since, from the physical point
of view, there is a statistical mixture of the different pure states
represented by the density matrices $\rho_j$ above.

Being a Hermitian operator, the matrix $\rho$ can be diagonalized. Its
eigenvalues $\{\lambda_1,\ldots,\lambda_k\}$ satisfy
\begin{equation}
  \label{eq:8}
  0\leq \lambda_{j}\leq 1, \quad \forall j.
\end{equation}
If the state is pure, there is one eigenvalue equal to one, the rest being
zero. Obviously, the rank of $\rho$ as a projector on the Hilbert space
$\mathcal{H}$ coincides with the number of nonvanishing eigenvalues.

It is an immediate property that a state $\rho$ is pure if and only if 
\begin{equation}
   \label{eq:6i}
   \mathrm{Tr}\rho^2=1.
\end{equation}
The proof requires only Eq.~\eqref{eq:8}.

The description of a quantum system in terms of a density matrix uses von
Neumann's equation to introduce the dynamics. Then we know that, given the
Hamiltonian operator $H$, the evolution of the state $\rho$ is given by
\begin{equation}
   \label{eq:7}
   i\hbar \dot \rho(t)=[H(t),\rho(t)],
\end{equation}
where $[\cdot,\cdot]$ is the usual commutator of operators.

Using a simple proof which is formally identical to the one that we shall
present in the next section, one can easily show that this dynamics is
purity-preserving, i.e., $d (\mathrm{Tr}\rho^2 ) / dt = 0$.

\subsection{Purity preservation in non-statistical Ehrenfest dynamics}

The Ehrenfest equations \cite{Alonso2010NJP, Alonso2012} for a system composed
of a set of $M$ classical particles (typically nuclei; described by the phase
space variables $R := (\vec{R}_1,\ldots,\vec{R}_M)$, $P :=
(\vec{P}_1,\ldots,\vec{P}_M)$) and a set of $n$ quantum particles (typically
electrons; described by a wavefunction $\psi$, defined on the space
parameterized by $r := (\vec{r}_1,\ldots,\vec{r}_n)$) are:
{\allowdisplaybreaks
\begin{eqnarray}
  \dot{\vec{R}}_J(t) & = & \frac{\vec{P}_J}{M_J},
  \label{eq:ehrenfest1}
 \\
  \dot{\vec{P}}_J(t) & = & -\langle \psi(t) \vert \frac{\partial
  H_e}{\partial \vec R_J}(R(t)) \vert \psi(t)\rangle,
  \label{eq:ehrenfest2}
 \\
  i\hbar \frac{{\rm d}}{{\rm d}t} |\psi(t)\rangle & = & 
   H_e(R(t))\vert\psi(t)\rangle,
  \label{eq:ehrenfest3}
\end{eqnarray}
}
where $J=1,\ldots,M$ and the electronic Hamiltonian operator $H_e$ is
related to the molecular one $H$ and it is defined as follows:
\begin{align}
\label{eq:29}
H_e(R) :=& - \hbar^2\sum_j\frac 12
 \nabla_j^2+\frac 1{4\pi \epsilon_0}\sum_{J<K}\frac{Z_JZ_K}{|\vec R_J-  \vec
   R_K|} \nonumber \\
&+\frac 1{4\pi \epsilon_0}\sum_{j<k}\frac 1{|\vec r_j-\vec r_k|}-
  \frac 1{4\pi \epsilon_0}\sum_{J,j}\frac {Z_J}{|\vec R_J-\vec
    r_j|}\nonumber \\
= &H + \hbar^2\sum_J\frac 1{2M_J}\nabla_J^2,
\end{align}
where all sums must be understood as running over the whole natural set for
each index, $M_J$ is the mass of the $J$-th nucleus in units of the electron
mass, and $Z_J$ is the charge of the $J$-th nucleus in units of (minus) the
electron charge.
 
At first sight, given the similarity between Eq. \eqref{eq:ehrenfest3} and the
Schrödinger equation for an isolated full-quantum system, one might
erroneously think that the Ehrenfest evolution for the quantum part of the
system is unitary. \cite{Marx:2000,Teilhaber:1992,Kalia1990} If this was
correct, then it would be trivial to prove that ED is purity preserving, but
this is not the case. As is well known, for a one to one transformation, we
can define unitarity as the property of preserving the scalar product, i.e.,
given two arbitrary vectors $\varphi$ and $\phi$, we say that $U$ is unitary
if $\langle U\varphi| U\phi\rangle = \langle\varphi|\phi\rangle$. One can
easily see that any reversible transformation $U$ that enjoys this property is
necessarily linear:
\begin{align*}
\langle &U(\varphi_1+\varphi_2) |U\phi\rangle=
\langle \varphi_1+\varphi_2| \phi\rangle=
\langle\varphi_1| \phi\rangle+\langle \varphi_2| \phi\rangle
\\
&=\langle U\varphi_1| U\phi\rangle+\langle U\varphi_2| U\phi\rangle
=\langle (U\varphi_1+U\varphi_2)| U\phi\rangle.
\end{align*}
As $U$ is reversible, $U\phi$ is an arbitrary vector, and therefore,
we must have,
$$
U(\varphi_1+\varphi_2)=U\varphi_1+U\varphi_2.
$$
But, although the quantum part of the the equations of motion in
\eqref{eq:ehrenfest3} resembles a typical Schr\"odinger equation, the coupling
with the classical part makes the evolution of the quantum system nonlinear.
Consequently, it cannot be a unitary transformation as defined above.

Despite this non-unitarity, it is very simple to prove that, if we consider
the evolution of a single trajectory $(R,P,\psi)$ of an Ehrenfest system, the quantum part is always in a pure state:
\begin{theorem}
  Let $(R,P,\psi)$ be the initial state of an Ehrenfest
  system subject to the dynamics given by eqs. 
  \eqref{eq:ehrenfest1}--\eqref{eq:ehrenfest3}. 
  Then, the quantum part of the system is always in a pure state
\end{theorem}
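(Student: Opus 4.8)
The plan is to track the one-parameter family of density matrices $\rho(t) := |\psi(t)\rangle\langle\psi(t)|$ built from the solution of the quantum equation~\eqref{eq:ehrenfest3} along the given trajectory, and to show that its purity $\mathrm{Tr}\,\rho(t)^2$ never leaves the value $1$. By construction $\rho(t)$ is rank-one, so if we can also guarantee $\mathrm{Tr}\,\rho(t)=1$ for all $t$, the criterion~\eqref{eq:6i} immediately certifies that the quantum part stays pure. The essential point to make precise is that, although the full Ehrenfest flow on $(R,P,\psi)$ is genuinely nonlinear---as the preceding discussion stresses---this nonlinearity only couples distinct trajectories. Along one fixed solution the nuclear coordinates $R(t)$ are a definite function of time, so the electronic generator $H(t) := H_e(R(t))$ is, at each instant, an ordinary Hermitian operator. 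The quantum evolution therefore reduces to a linear, norm-preserving, time-dependent Schr\"odinger equation, even though the transformation that produced $R(t)$ was not itself linear.

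First I would differentiate $\rho(t)$, inserting~\eqref{eq:ehrenfest3} and its adjoint (legitimate because $H(t)$ is Hermitian), to obtain the von Neumann form
\begin{equation*}
 i\hbar\,\dot\rho(t) = [\,H(t),\rho(t)\,].
\end{equation*}
This is formally identical to~\eqref{eq:7}, the only difference being the explicit time dependence carried by $R(t)$, which plays no role in what follows. Next I would compute the rate of change of the purity,
\begin{equation*}
 \frac{d}{dt}\,\mathrm{Tr}\,\rho^2 = 2\,\mathrm{Tr}(\rho\,\dot\rho)
   = \frac{2}{i\hbar}\,\mathrm{Tr}\big(\rho\,[H,\rho]\big),
\end{equation*}
and expand the commutator as $\mathrm{Tr}(\rho H\rho)-\mathrm{Tr}(\rho^2 H)$. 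The cyclic property of the trace collapses these two terms, so the derivative vanishes identically. Since the initial state is the pure state $\psi$, we have $\mathrm{Tr}\,\rho(0)^2=1$, and the vanishing derivative forces $\mathrm{Tr}\,\rho(t)^2\equiv 1$; by~\eqref{eq:6i} the state is pure at every time.

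The one subtlety I would be careful about is precisely the tension with the non-unitarity argument given just above the statement: a reader might object that purity preservation looks like a unitarity claim, which the text has just denied. The hard part is therefore conceptual rather than computational---one must emphasise that unitarity here is a property of the global map $\psi(0)\mapsto\psi(t)$ acting on the whole Hilbert space, and it fails because $R(t)$ depends on the chosen initial $\psi(0)$; by contrast, once a single trajectory is fixed the instantaneous generator $H(t)$ is Hermitian, and that Hermiticity is all the trace argument actually consumes. Equivalently, one could bypass the purity computation altogether by noting that $\rho(t)$ is manifestly a rank-one operator and checking only that the norm is conserved, since $\tfrac{d}{dt}\langle\psi|\psi\rangle = \tfrac{i}{\hbar}\big(\langle\psi|H|\psi\rangle-\langle\psi|H|\psi\rangle\big)=0$; a normalised rank-one projector is pure by definition, so neither~\eqref{eq:8} nor~\eqref{eq:6i} would then be strictly needed.
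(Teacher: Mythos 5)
Your proof is correct and is essentially the paper's own argument: both derive the von Neumann-like equation $i\hbar\,\dot\rho=[H_e(R(t)),\rho]$ along the fixed trajectory and then use cyclicity of the trace to show $\frac{d}{dt}\mathrm{Tr}\,\rho^2=0$, so purity is conserved from its initial value of $1$. Your additional remarks (why Hermiticity of the instantaneous generator suffices despite the non-unitarity of the global flow, and the alternative norm-conservation argument) are sound clarifications but do not change the substance of the proof.
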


\begin{proof}
We consider the density matrix $\rho=|\psi\rangle\langle\psi|$ corresponding
to the quantum part of the Ehrenfest system. The evolution of $|\psi\rangle$
is given by Eq.~\eqref{eq:ehrenfest3}, which induces a von Neumann-like
evolution for the density matrix at every time $t$
$$
i\hbar \dot \rho(t)=[H_e(R(t)), \rho(t)],
$$
being $H_e$ the electronic Hamiltonian in~\eqref{eq:29}. Then, 
\begin{align*}
\frac {d}{dt}\mathrm{Tr}\rho^{2}&=2 \mathrm{Tr}(\dot\rho\rho)=
2\mathrm{Tr}([H_e,\rho]\rho)  \\
&=2\left ( \mathrm{Tr}(H_e\rho\rho) -
 \mathrm{Tr}(\rho H_e \rho) \right ) = 0,
\end{align*}
for all times $t$. Hence, if $\mathrm{Tr}\rho^2=1$ at $t=0$, it will remain
so.
\end{proof}

The main goal of the rest of the paper is to prove that, when we consider the
case of a statistical ensemble of Ehrenfest trajectories, this is no longer
the case: the evolution of an ensemble whose quantum part is a pure state at
$t=0$ will become an ensemble in which the quantum part is mixed as long as
the initial conditions for the nuclei are not perfectly determined. Thus, in
such a case, the ED produces a purity change, a necessary condition for
decoherence.

\section{Geometric Quantum Statistical Mechanics}
\label{sec:geom-quant-stat}

\subsection{Geometric quantum mechanics}
\label{sec:append-geom-quant}

The aim of this section is to provide a very brief summary of the mathematical
formalism more thoroughly introduced in \onlinecite{Alonso2011JPA} and
references therein.

Classical mechanics can be formulated in several mathematical frameworks each
corresponding to a different level of abstraction: Newton's equations, the
Hamiltonian formalism, the Poisson brackets, etc. Perhaps its more abstract
and general formulation is geometrical, in terms of Poisson manifolds.
Similarly, QM can also be formulated in different ways, some of which resemble
its classical counterpart  (see Ref.~\onlinecite{MeyMiller1979}
for a classical reference in the context of molecular systems,
Ref.~\onlinecite{Alonso2011JPA} for a more recent one, and
Refs.~\onlinecite{Kibble:1979, Heslot:1985, brody:2001, carinena:2007} for
more mathematical approaches). For example, the observables
(self-adjoint linear operators) are endowed with a Poisson algebra structure
(based on the commutators) almost equal to the one that characterizes the
dynamical variables in classical mechanics. Moreover, Schr\"odinger equation
can be recast into Hamilton's equations form by transforming the complex
Hilbert space into a real one of double dimension. The observables are also
transformed into dynamical functions in this new phase space, in analogy to
the classical one. Finally, a Poisson bracket formulation has also been
established for QM, which permits to classify both the classical and the
quantum dynamics under the same heading.

This variety of formulations does not emerge from academic caprice; the
succesive abstractions simplify further developments of the theory, such as
the step from microscopic dynamics to statistical dynamics: the derivation of
Liouville's equation (or von Neumann's equation in the quantum case), at the
heart of statistical dynamics, is based on the properties of the Poisson
algebra.\cite{Alonso2011JPA}

Consider a basis $\{ |\psi_{k}\rangle\}$ for the Hilbert space ${\cal H}$.
Each state $|\psi\rangle\in {\cal H}$ can be written in that basis with
complex components (or coordinates, in more differential geometric terms) $\{
z_{k}\}$:
$$
{\cal H}\ni |\psi\rangle=\sum_{k}z_{k}|\psi_{k}\rangle.
$$

Now, we can just take the original vector space inherent to the Hilbert space,
and turn it into a real vector space (denoted as $M_Q$), by splitting each
coordinate into its real and imaginary parts:
$$
\mathbb{C}^n\sim \mathcal{H}\ni z_{k}=q^{k}+ip_{k} \mapsto (q^{k}, p_{k})\in 
\mathbb{R}^{2n}\equiv M_{Q}.
$$
We will use real coordinates $(q^k, p_k)$, $k=1, \ldots,n$, to represent the
points of $\mathcal{H}$ when thought of as real vector space elements. From
this point of view, the similarities between the quantum dynamics and the
classical one will be more evident. It is important to notice, though, that
despite the formal similarities these coordinates $(q^k, p_k)$ do not
represent physical positions and momenta of any actual system. They simply
correspond to the real and imaginary parts of the complex coordinates used for
the Hilbert space vectors in a given basis.

The scalar product of the Hilbert space is encoded in three tensors defined on
the real vector space $M_Q$. The interested reader is addressed to
Ref.~\onlinecite{Alonso2011JPA} for the details. We just highlight here that
two of these tensors correspond to a metric tensor $g$ and a symplectic one
$\omega$ which allow us to write the expression of the Schr\"odinger equation
as a Hamilton equation, in a form which is completely analogous to the
Hamiltonian formulation of classical mechanics. It is precisely this
similarity the key ingredient to successfully combine classical and quantum
mechanics in a well-defined framework to describe the Ehrenfest equations
\eqref{eq:ehrenfest1}--\eqref{eq:ehrenfest3} as a Hamiltonian system, as we
will summarize later and it can also be seen in
Ref.~\onlinecite{Alonso2011JPA}.

In this formalism, instead of considering the observables as linear operators
(plus the usual requirements, self-adjointness, boundedness, etc.) on the
Hilbert space $\mathcal{H}$, we shall represent them as functions defined on
the real space $M_Q$. The reason for that is to resemble, as much as possible,
the classical mechanical approach. But we cannot forget the linearity of the
operators, and thus the functions must be chosen in a very particular way. The
usual choice is inspired in Ehrenfest's description of quantum mechanical
systems and defines, associated to any operator $A$ on $\mathcal{H}$, a
function of the form:
\begin{equation}
\label{eq:2}
f_A(\psi):=\frac 12\langle\psi,A\psi\rangle.
\end{equation}

The operations which are defined on the set of operators can also be
translated into this new language. Thus, the associative product of operators
(the matrix product when considered in a finite dimensional Hilbert space),
the commutator (which encodes the dynamics) and the anticommutator can be
written in terms of the functions of the type defined in Eq.~\eqref{eq:2}. As
an example, we can write the case of the commutator, which will be used later:
Given two operators $A$ and $B$, with the corresponding functions $f_A$ and
$f_B$, the function associated to the commutator $i[A,B]=i(AB-BA)$
(the imaginary unit is used to preserve hermiticity) is written
as
\begin{equation}
\label{eq:4}
f_{i[A,B]}=\{ f_A, f_B\}=\frac 12 \sum_k \left (\frac {\partial f_A}{\partial
      q^k}\frac{\partial f_B}{\partial p_k}-
\frac {\partial f_A}{\partial
      p_k}\frac{\partial f_B}{\partial q^k} \right ).
\end{equation}
Thus, from the formal point of view, the operation is completely
analogous to the Poisson bracket used in classical mechanics.

Another important property in the set of operators of QM is the corresponding
spectral theory. In any quantum system, it is of the utmost importance to be
able to find eigenvalues and eigenvectors. We can summarize these properties
in the following result: If $f_{A}$ is the function associated to the
observable $A$, then, as a consequence of Ritz's theorem, \cite{Cohen1977}
\begin{itemize}
\item the eigenvectors of the operator $A$ coincide with the critical
      points of the function $f_{A}$, i.e.,
$$
df_{A}(\psi)=0 \Leftrightarrow \psi \text{ is an eigenvector of } A.
$$
 
\item the eigenvalue of $A$ at the eigenvector $\psi$ is the value that
      the function $f_{A}$ takes at the critical point $\psi$.
\end{itemize}

As usual, the dynamics can be implemented in essentially two different forms
(but always in a way which is compatible with the geometric structures
introduced so far): the so-called Schr\"odinger and Heisenberg pictures
\cite{Alonso2011JPA}. In the Heisenberg picture, which is the one we will use
in what follows, the dynamics is introduced by translating the well-known
Heisenberg equation into the language of functions:
\begin{equation}
\label{eq:3}
i\hbar \dot f_A=\{f_A, f_H\},
\end{equation}
being $f_H$ the function associated to the Hamiltonian operator and $A$ any
observable.

\subsection{Geometric quantum statistical mechanics}
\label{sec:gqsm}

\subsubsection{The probability density and the density matrix}

A classical result in QM states that, given a quantum system, the average
value of any observable $A$ can always be computed as the trace of the
observable and some density state $\rho$, as defined in Section
\ref{sec:noti-purity-quant}:
\begin{equation}
\label{eq:1}
\langle A\rangle=\mathrm{Tr}(\rho A).
\end{equation}
This result is known as Gleason theorem (see Ref. \onlinecite{Gleason1957} for
details).

Instead of using the density matrix $\rho$, we can use an alternative approach
which is formally closer to the description of classical statistical systems
and is used, for instance, in Ref.~\onlinecite{Breuer2002}. Consider a probability
distribution $F_Q$ on $M_Q$ and a volume element $d\mu_Q$, satisfying the
properties:
\begin{itemize}
\item $\int_{M_Q}d\mu_Q(\psi) F_Q(\psi)=1.$
\item Expected values can be computed as
\begin{equation}
\label{eq:exp_q_general}
\langle A\rangle=\int_{M_Q}d\mu_Q(\psi) F_Q(\psi)
 \frac{ f_A(\psi)}{\langle\psi|\psi\rangle},
\end{equation}
  for all $f_A$ of the form \eqref{eq:2}; $A$ being a Hermitian operator.
  Notice that we have chosen to integrate over all the states in $M_Q$
  and divide by the norm of the state, as it is done in the final
  section of Ref.~\onlinecite{Alonso2011JPA}. This is equivalent to integrate
  over the states of norm one as it was also done in the first
  sections of Ref.~\onlinecite{Alonso2011JPA}.
\end{itemize}

The canonical symplectic form of $M_Q$ described in Ref.~\onlinecite{Alonso2011JPA}
provides a natural candidate for the volume form since it is also preserved by
the quantum evolution (see Ref.~\onlinecite{Alonso2011JPA} for the technical
details).

Some simple examples for the distribution $F_Q$ can also be provided:

\begin{example}
For the case of the pure state $\rho=\frac{|\psi_0\rangle\langle\psi_0|}{\langle\psi_0|\psi_0\rangle}$,
we can use
\begin{equation}
\label{eq:10}
F_Q(\psi)=\delta(\psi-\psi_0),
\end{equation}
to satisfy the above two equalities. Analogously, a mixed state
$\rho=\sum_{k}p_k\frac{|\psi_k\rangle\langle\psi_k|}{\langle\psi_k|\psi_k\rangle}$ (where $\sum_k p_k=1$ and
$p_k\geq 0$) can be represented by
\begin{equation}
\label{eq:10b}
F_Q(\psi)=\sum_k p_k\delta(\psi-\psi_k).
\end{equation}
In particular, it is straighforward to prove that, in this case,
\begin{align*}
\langle A\rangle &= \int_{M_Q}d\mu_Q(\psi)\sum_k p_k\delta(\psi-\psi_k)
\frac{f_A(\psi)}{\langle\psi|\psi\rangle}  \\
                 &=\sum_k p_k
\frac{f_A(\psi_k)}{\langle\psi_k|\psi_k\rangle}=\sum_k p_k
\mathrm{Tr}( \rho_k 
  A)=\mathrm{Tr}(\rho A),
\end{align*}
for $\rho_k=\frac {|\psi_k\rangle\langle\psi_k|}{\langle\psi_k|\psi_k\rangle}$ and $\rho=\sum_k p_k \rho_k$.
\end{example}

The definition of the function $F_Q$ contains a number of ambiguities which
are explained in detail in Ref.~\onlinecite{Alonso2011JPA}. Essentially, we can add
to any $F_Q$ a term which integrates to zero and has vanishing second-order
momenta. Due to the structure of the observable functions $f_A$, this
modification will not change any average value computed as in
Eq.~(\ref{eq:exp_q_general}), nor will it change the normalization condition
for $F_Q$. This defines an equivalence class of distributions that produce the
same average values, and (through the relationship between distributions and
density matrices) Gleason theorem implies that there is always a distribution
in the class in which the fact that we are dealing with a convex combination
of rank-1 projectors is visible. This is what we used in the example above.

In order to advance in the formulation and illustrate these facts more
precisely, we can consider, for every $|\psi\rangle\in \mathcal{H}$, the
following function:
\begin{equation}
\label{eq:6}
 f_{|\psi\rangle\langle \psi|}(\eta)=\frac 12 \langle\eta 
    |\psi\rangle\langle\psi| \eta\rangle.
\end{equation}

Now, it is easy to see that the following (averaged, now $\psi$-independent)
function
\begin{equation}
\label{eq:11}
f_\rho (\eta) =
 \int_{M_Q}d\mu_Q(\psi)F_Q (\psi)
 \frac{f_{|\psi\rangle\langle\psi|}(\eta)}{\langle\psi|\psi\rangle}
\end{equation}
is the phase-space function associated to a density operator $\rho$
defined by
\begin{equation}
\label{eq:14}  
\rho = 
\int_{M_Q}d\mu_Q(\psi)F_Q(\psi)
\frac{|\psi\rangle\langle\psi|}{\langle\psi|\psi\rangle},
\end{equation}
i.e.,
\begin{equation}
\label{eq:11bis}
f_\rho(\eta)=\frac 12\langle \eta |\rho\eta\rangle.	
\end{equation}

With these definitions, we can now prove the following result:
\begin{theorem}
\label{sec:density}
Let us consider a quantum system in a state described by a probability
distribution $F_Q$. Then, the equivalence class of distributions that produce
the same expected values as $F_Q$ contains as an element:
\begin{equation}
\label{eq:27c}
F_Q\sim \tilde F_Q = \sum_k \lambda_k \delta(\psi-\psi_k),
\end{equation}
where $\{\psi_k\}$ is the set of critical points of $f_\rho$
(as defined in Eqs.~(\ref{eq:11}) and~(\ref{eq:11bis})) and 
$$
\lambda_k=f_\rho(\psi_k).
$$

\end{theorem}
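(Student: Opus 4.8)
The plan is to reduce the statement to a question purely about density matrices, and then to read $\tilde F_Q$ off the spectral decomposition of $\rho$. The starting observation is that two distributions lie in the same equivalence class (produce the same expected values) if and only if they determine the same density operator $\rho$ through Eq.~\eqref{eq:14}. Indeed, combining Eqs.~\eqref{eq:exp_q_general} and~\eqref{eq:14} gives $\langle A\rangle=\mathrm{Tr}(\rho A)$ for every Hermitian $A$ (this is exactly the computation carried out in the Example for the discrete case). Hence if two distributions yield operators $\rho$ and $\rho'$, then $\mathrm{Tr}((\rho-\rho')A)=0$ for all Hermitian $A$; since $\rho-\rho'$ is itself Hermitian and the trace pairing is non-degenerate on Hermitian operators (taking $A=\rho-\rho'$ gives $\mathrm{Tr}((\rho-\rho')^2)=0$), this forces $\rho=\rho'$. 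The equivalence class of $F_Q$ is therefore precisely the set of distributions whose associated operator is $\rho$, and it suffices to exhibit one such distribution having the stated $\delta$-form.

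First I would record that the operator $\rho$ defined in Eq.~\eqref{eq:14} is a genuine density matrix: it is an average, with weights $F_Q\geq 0$ integrating to one, of the rank-one projectors $|\psi\rangle\langle\psi|/\langle\psi|\psi\rangle$, so it is Hermitian, positive semidefinite and of unit trace. In particular the spectral theorem applies, and $\rho=\sum_k a_k|\psi_k\rangle\langle\psi_k|$ with an orthonormal eigenbasis $\{\psi_k\}$ and eigenvalues $a_k\in[0,1]$, $\sum_k a_k=1$.

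Next I would invoke the Ritz/spectral result recalled in Sec.~\ref{sec:append-geom-quant}, applied to the observable function $f_\rho$ of the form~\eqref{eq:2}: the critical points of $f_\rho$ (equivalently, of the Rayleigh quotient $f_\rho(\psi)/\langle\psi|\psi\rangle$) are exactly the eigenvectors $\psi_k$ of $\rho$, and the critical values $\lambda_k=f_\rho(\psi_k)$ are the corresponding eigenvalues. Feeding the candidate $\tilde F_Q=\sum_k\lambda_k\,\delta(\psi-\psi_k)$ into Eq.~\eqref{eq:14} and integrating against the $\delta$'s yields
\begin{equation*}
\tilde\rho=\sum_k\lambda_k\frac{|\psi_k\rangle\langle\psi_k|}{\langle\psi_k|\psi_k\rangle}
=\sum_k a_k|\psi_k\rangle\langle\psi_k|=\rho ,
\end{equation*}
which is nothing but the spectral decomposition of $\rho$. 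By the first paragraph, $\tilde F_Q$ then produces the same expected values as $F_Q$, i.e.\ it belongs to its equivalence class. The same identification of $\lambda_k$ with the eigenvalues shows $\sum_k\lambda_k=\mathrm{Tr}\,\rho=1$ with $\lambda_k\geq 0$, so $\tilde F_Q$ is a bona fide probability distribution.

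The step needing the most care is the correct bookkeeping of the normalization conventions (the factor $1/2$ in Eqs.~\eqref{eq:2} and~\eqref{eq:11bis} together with the division by $\langle\psi|\psi\rangle$), which must be tracked so that the critical values $\lambda_k=f_\rho(\psi_k)$ come out equal to the spectral weights appearing in Eq.~\eqref{eq:14} and summing to one. The other delicate point is spectral degeneracy: when an eigenvalue of $\rho$ is repeated, the critical set of $f_\rho$ is not a finite collection of points but a sphere inside the corresponding eigenspace, so $\{\psi_k\}$ must be understood as a choice of orthonormal basis of each eigenspace. Since the sum displayed above is independent of that choice, the conclusion is unaffected.
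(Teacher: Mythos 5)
Your proof is correct and follows the same overall route as the paper's: both arguments reduce the statement to the operator $\rho$ of Eq.~\eqref{eq:14}, take its spectral decomposition, and identify $\{\psi_k\}$ and $\lambda_k$ with the eigenvectors and eigenvalues via the Ritz-theorem correspondence between critical points of $f_\rho$ and the spectrum of $\rho$. The differences lie in the supporting steps, and they work in your favor. Where the paper appeals to Gleason's theorem to guarantee that $\rho$ is the unique operator reproducing the given expected values, you prove the required injectivity elementarily: if $\mathrm{Tr}\big((\rho-\rho')A\big)=0$ for every Hermitian $A$, then taking $A=\rho-\rho'$ gives $\mathrm{Tr}\big((\rho-\rho')^2\big)=0$ and hence $\rho=\rho'$. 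This makes the characterization of the equivalence class (same averages if and only if same $\rho$) self-contained, and it is a genuine ``iff'', which the paper only exploits in one direction. You also describe $\rho$ correctly as Hermitian, positive semidefinite and of unit trace, whereas the paper's parenthetical ``$\rho^2=\rho$'' is a slip (that condition would force $\rho$ to be a pure-state projector). Finally, your two closing caveats --- the bookkeeping of the factor $1/2$ in Eqs.~\eqref{eq:2} and~\eqref{eq:11bis} against the normalization $\sum_k\lambda_k=1$, and the fact that under spectral degeneracy the critical set of $f_\rho$ is a sphere in each eigenspace so $\{\psi_k\}$ must be read as a choice of orthonormal basis --- are real issues that the paper's proof silently glosses over; flagging them, and noting that the resulting operator is independent of the basis choice within each eigenspace, is a genuine improvement in rigor rather than a deviation in method.
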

\begin{proof}
It is immediate if we realize that
\begin{align}
\label{eq:28}
\langle A\rangle &=\int_{M_Q}d\mu_Q(\psi) F_Q(\psi)
\frac{f_A(\psi)}{\langle\psi|\psi\rangle} \nonumber \\
&=\mathrm{Tr}\left ( \left (\int_{M_Q}d\mu_Q(\psi) F_Q(\psi)
  \frac{|\psi\rangle\langle\psi|}{\langle\psi|\psi\rangle} \right ) 
  A\right )=\mathrm{Tr}(\rho A).
\end{align}

Indeed, the operator $\rho$ appearing in this expression and defined in
Eq.~(\ref{eq:14}) can be shown to be a density matrix (i.e., $\rho^2 = \rho$,
$\rho^+ = \rho$, and $\mathrm{Tr}\rho = 1$), and Gleason's theorem
guarantees it is unique.

We also know that, if we use the spectral decomposition of $\rho$, i.e.,
$\rho=\sum_k \lambda_k \rho_k$, with
$\rho_k=\frac {|\psi_k\rangle\langle\psi_k|}{\langle\psi_k|\psi_k\rangle}$,
being $\psi_k$ and $\lambda_k$ its eigenvectors and eigenvalues, respectively,
we also have that
\begin{align}
\langle A \rangle  &=  \int_{M_Q} d\mu_Q(\psi) \tilde F_Q(\psi)
\frac{f_A(\psi)}{\langle\psi|\psi\rangle} 
  =  \sum_k \lambda_k
\frac{f_A(\psi_k)}{\langle\psi_k|\psi_k\rangle} \nonumber \\&=
\sum_k \lambda_k \mathrm{Tr}(\rho_k A)=\mathrm{Tr}(\rho A),
\end{align}
as we set out to prove.
\end{proof}

Hence, from Gleason theorem, we know that, among all the equivalent
distributions, there is always one $F_Q$ equal to a convex combination of
Dirac-delta functions. Notice that the function $F_Q$ provides us with all the
information encoded in the density matrix $\rho$. As a probability density, it
allows us to define the average values of the observables, and in the form
$f_\rho$, it allows us to read the spectrum of $\rho$ from the set of critical
points.

This result allows us to realize in terms of $F_Q$ any quantum system: as the
average values coincide with those obtained from the spectral decomposition of
the density matrix, we can use it to implement any desired model. We will see
a practical example in the following sections.

\subsubsection{Geometrical computation of purity}

Finally, we would like to analyze purity in this geometrical context. We saw
in Section \ref{sec:notion-decoherence} that purity preservation is encoded in
the behavior of $\rho$ as a projector. If the evolution of the system
preserves the purity of the density matrix we say that the evolution is
\textbf{purity-preserving}, while in the other case, we call it \textbf{purity
non-preserving}.

Our first task is to express in this geometrical language the concept of
purity. Consider the following expression for any $f_\rho(\psi)$:
\begin{equation}
\label{eq:9}
\langle\rho\rangle:=\int_{M_Q} d\mu_Q(\psi)F_Q(\psi)
  \frac{f_\rho(\psi)}{\langle\psi|\psi\rangle}.
\end{equation}
Then, $\langle\rho\rangle=1$ if the state is pure and $\langle\rho\rangle<1$
if the state is mixed. This is so because, trivially,
$$
\langle \rho\rangle=\mathrm{Tr}(\rho.\rho)=\mathrm{Tr} \rho^2.
$$

Let us now mention how this change in the purity of the state can be detected
in the measurement of average values of observables. Recall that a change in
the purity as above produces a transformation at the level of the states such
that a pure state corresponding to a distribution of the form
$F_Q(\psi)=\delta(\psi-\psi_0)$ becomes a distribution of the form
$F_Q(\psi)=\sum_j p_k\delta(\psi-\psi_k)$, where there is more than one value
of $p_j$ different from zero. Then, it is immediate to prove that the average
value of a generic observable $A$ will be different between one case and the
other.

\section{Purity change in Ehrenfest statistics}
\label{sec:purity-non-pres}

\subsection{The definitions}
\label{s:4.A}
In this section, we will now extend the previous construction to the Ehrenfest
case, by combining it with the approach introduced in
Ref.~\onlinecite{Alonso2011JPA}.

First, let the physical states of our Ehrenfest system correspond to the
points in the Cartesian product
$$
M=M_C\times M_Q,
$$
where $M_C$ is the phase space of the classical system. The physical 
observables will be now functions defined on that manifold. To define  
statistical averages of observables depending on classical and quantum 
degrees of freedom (i.e., functions as $f_A(\xi,\psi)$) we consider
\begin{equation}
\label{eq:15}
\langle A\rangle=\int_{M_C\times
      M_Q}\hspace{-0.2cm}d\mu_{QC}(\xi,\psi) 
  F_{QC}(\xi,\psi)\frac{f_A(\xi,\psi)}{\langle\psi|\psi\rangle},
\end{equation}
where $\xi=(R,P)\in M_C$ represents the classical degrees of freedom,
$\psi=\psi(q,p)\in M_Q$ the quantum ones, and $d\mu_{QC}=d\mu_Qd\mu_C$ is the
volume on the state space manifold $M$.

We can now ask what properties we must require from $F_{QC}$ in order
for Eq.~\eqref{eq:15} to correctly define the statistical mechanics for the
Ehrenfest dynamics (ED). Analogously to what happened in the quantum case,
the conditions are as follows:
\begin{itemize}
    \item The expected value of any constant observable should be equal to that   constant, which implies that the integral on the whole set of states is equal to one:
    \begin{equation}
    \label{eq:16b}
    \int_{M_C\times M_Q}d\mu_{QC}(\xi,\psi)F_{QC}(\xi,\psi)=1.
    \end{equation}
    
    \item The average, for any purely quantum observable $f_A$ of the form in~\eqref{eq:2}, associated to a positive definite Hermitian operator $A$, should be positive. This implies the usual requirement of positive 
    probability density in standard classical statistical mechanics.
\end{itemize}

In Ref.~\onlinecite{Alonso2011JPA}, it was proved that the ED defined on
the manifold $M$ is Hamiltonian with respect to the Poisson bracket
\begin{equation}
  \label{eq:36}
\{\cdot, \cdot \}_{QC}=\{\cdot, \cdot \}_{C}+\frac i\hbar \{\cdot,
\cdot \}_{Q},
  \end{equation}
where $\{\cdot, \cdot \}_{C}$ represents the usual Poisson bracket of
the classical degrees of freedom, and $\{\cdot, \cdot \}_{Q}$
represents the Poisson bracket defined by Eq.~\eqref{eq:4}. 

Being Hamiltonian, we know that we can define an invariant measure on the
space of states $M$. We shall denote such a measure by $d\mu_{QC}$. Thus, the
dynamics defined on the microstates is straightforwardly translated into
the probability density $F_{QC}$ as a Liouville equation:
\begin{equation}
\label{eq:12}
\dot F_{QC}=\{ f_H,F_{QC}\}_{QC},
\end{equation}
where $f_H$ is the Hamiltonian function of the Ehrenfest system:
\begin{equation}
f_H(R,P;q,p):=\sum_{J}\frac{\vec{P_J}^2}{2M_J}+
    \frac{\langle\psi(q,p) |\hat H_e(R)|\psi(q,p)\rangle}
      {\langle\psi(q,p)|\psi(q,p)\rangle}.
\label{eq:26}
\end{equation}
Analogously, the evolution of any function
$f_A(\xi,\psi)$ is given by
\begin{equation}
  \label{eq:39}
  \dot f_A(\xi, \psi)=\{ f_A(\xi, \psi), f_H(\xi, \psi)\}_{QC}.
\end{equation}

Again, this property provides us with a natural candidate for the volume
element $d\mu_{QC}$ (and, for analogous reasons, also for $d\mu_C$) arising
from the symplectic form which gives its Hamiltonian structure to the
Liouville equation in this context. As it happens in the pure quantum case,
this volume form is preserved by the dynamics (see Ref.~\onlinecite{Alonso2011JPA}).

With this in mind, we can consider the analogue of the objects introduced in
the previous section. Hence, given an Ehrenfest system in a state described by
a probability density $F_{QC}(\xi, \eta)$, where $\xi=(R,P)$ and
$\eta=\eta(q,p)$, we can consider the definition of the operator
\begin{equation}
\label{eq:25}
  \rho(\xi):=\int_{M_Q}d\mu_Q(\psi) F_{QC}(\xi,\psi) \frac{|\psi\rangle\langle\psi|}{\langle\psi|\psi\rangle},
\end{equation}
which still depends on the classical variables $\xi$ and, therefore, it can
be interpreted as having turned the phase-space representation of the quantum
part into the more familiar one based on density matrices. Also, since we have
not integrated over $\xi$, this object still represents in a certain way a probability density in the classical part of the space.

As for any $\xi$-dependent operator (see Ref.~\onlinecite{Alonso2011JPA}), we can
define the associated phase-space function:
\begin{equation}
\label{eq:16}
f_\rho(\xi,\eta):=\int_{M_Q}d\mu_Q(\psi) F_{QC}(\xi,\psi)
    \frac{\langle\eta|\psi\rangle\langle\psi |\eta\rangle}
         {\langle\psi|\psi\rangle}.
\end{equation}

It is also possible to integrate again the object in Eq.~(\ref{eq:25}), and 
define:
\begin{equation}
\label{eq:13}
  \rho=\int_{M_C}d\mu(\xi) \rho(\xi),
\end{equation}
which is a purely quantum object encoding the averaged information of the
complete system. Notice that, as it is usually done and in order to lighten
the notation, we will often use the same symbol for different objects (as in
$\rho(\xi)$ and $\rho$), understanding that it is the explicit indication of
the variables on which they depend what distinguishes them notationally. Also,
for simplicity, we use the same symbols for operators in the full-quantum case
in sec.~\ref{sec:gqsm}, and for the ones in the quantum-classical scheme in
this section.

We can now formulate the dynamics in terms of this operator. After a
brief computation, we obtain:
\begin{equation}
\label{eq:24}
\frac d{dt}\rho=i\hbar^{-1} \int_{M_Q\times M_C}\!\!\!\!\!\!\!\!\!\!
  d\mu_{QC}(\xi,\psi)F_{QC}(\xi,\psi)
  \left [H_e(\xi),
    \frac{|\psi\rangle\langle\psi|}{\langle\psi|\psi\rangle}\right ],
\end{equation}
where $H_e(\xi)$ is the very electronic Hamiltonian defined at the beginning
of the paper. In ED, considered statistically, this equation represents the
analogue for the mixed case of von Neumann's equation. 

\begin{example}
If we consider a single (`pure') state of the classical system $\xi_0$ and a 
pure state of the quantum system $|\psi_0\rangle$, i.e., 
$$
F_{QC}(\xi,\psi)=\delta (\xi-\xi_0)\delta(\psi-\psi_0),
$$
we obtain
$$
f_\rho(\xi, \eta)=\delta (\xi-\xi_0)\frac{|\langle \psi_0, 
        \eta\rangle|^2}{\langle\psi_0|\psi_0\rangle}.
$$
Analogously, the associated operator in Eq.~\eqref{eq:25} reads
$$
\rho(\xi)=\delta(\xi-\xi_0)\frac{|\psi_0\rangle\langle\psi_0|}
              {\langle\psi_0|\psi_0\rangle},
$$
and the one in Eq.~\eqref{eq:13} is
$$
\rho=\frac{|\psi_0\rangle\langle\psi_0|}
              {\langle\psi_0|\psi_0\rangle},
$$
as expected.
\end{example}

We can also consider the two marginal distributions:
\begin{itemize}
\item The distribution in $M_Q$ obtained by integrating out the classical 
 degrees of freedom:
\begin{equation}
\label{eq:14b}
F_Q(\psi)=\int_{M_C} d\mu_C (\xi)F_{QC}(\xi,\psi),
\end{equation}
\item and the corresponding classical version in $M_C$:
\begin{equation}
\label{eq:17}
F_C(\xi)=\int_{M_Q}d\mu_Q(\psi)F_{QC}(\xi, \psi).
\end{equation}
\end{itemize}
Obviously both functions are distribution functions on the corresponding
manifolds with analogous properties to $F_{QC}$, and they could be used to
compute expected values of functions depending only on $\psi$ or $\xi$,
respectively.

\begin{example}
It is immediate to check that the definitions make sense for distributions of 
the form
\begin{equation}
\label{eq:22}
F_{QC}(\xi, \psi)=\delta(\xi-\xi_0)\sum_kp_k\delta(\psi-\psi_k),
\end{equation}
i.e., for `pure' classical part and a quantum-mixed one canonically expressed
with deltas.

In this case, the marginal distributions are of the form
\begin{equation}
\label{eq:23}
F_Q(\psi)=\sum_kp_k\delta(\psi-\psi_k), \quad F_C(\xi)=\delta(\xi-\xi_0).
\end{equation}
\end{example}

Also notice that, in terms of the quantum marginal distribution $F_Q$, we can
write the density matrix in Eq.~\eqref{eq:13} as
\begin{equation}
\label{eq:35}
\rho=\int_{M_Q}d\mu_Q(\psi)F_Q(\psi)\frac{|\psi\rangle\langle\psi|}
    {\langle\psi|\psi\rangle}.
\end{equation}

Once we have recovered the needed ingredients, we can discuss the quantum
purity of a system governed by ED:

\begin{definition}
We say that a quantum-classical system is \textbf{quantum-pure} if
and only if 
\begin{equation}
\label{eq:21}
\mathrm{Tr} \rho^2 = \langle \rho \rangle=\int_{M_Q}d\mu_Q(\psi)
  F_Q (\psi)\frac{f_{\rho}(\psi)}{\langle\psi|\psi\rangle}=1,
\end{equation}
being $\rho$ the one defined in eqs.~\eqref{eq:13} and~\eqref{eq:35}.

In case that the state of a system does not satisfy the condition above, we
say that it is \textbf{quantum-mixed}.
\end{definition}

For the sake of completeness, and in order to better connect with the purely
quantum case we discussed in sec.~\ref{sec:gqsm}, we can consider now the
function obtained from $f_\rho$ in Eq.~\eqref{eq:16} averaging directly over
$M_C$, i.e.,
\begin{align}
\label{eq:18}
f_\rho(\eta) & = \int_{M_C}d\mu_C(\xi)f_\rho(\xi, \eta)  \nonumber\\
 & =\int_{M_C\times M_Q}
   d\mu_{QC}(\xi,\psi)F_{QC}(\xi,\psi)\frac{\langle\eta|\psi\rangle
   \langle\psi |\eta\rangle}{\langle\psi|\psi\rangle} \nonumber \\
&= \int_{M_Q} d\mu_Q(\psi)F_Q(\psi)\frac{\langle\eta|\psi\rangle \langle\psi
    |\eta\rangle}{\langle\psi|\psi\rangle}.
\end{align}

This function plays the role of the function in~\eqref{eq:11} in the pure
case. Indeed, we have that
\begin{align}
\label{eq:19}
f_\rho(\eta)&=\langle \eta |  
\left ( \int_{M_C\times M_Q}d\mu_{QC}F_{QC}(\xi,\psi)\frac{|\psi
\rangle\langle\psi |}{\langle\psi|\psi\rangle} \right ) | \eta\rangle  
 \nonumber \\
 &=\langle \eta |  \left ( \int_{M_Q}d\mu_{Q}F_{Q}(\psi)\frac{|\psi
\rangle\langle\psi |}{\langle\psi|\psi\rangle} \right ) | \eta\rangle\ 
 = \langle \eta | \rho \eta \rangle,
\end{align}
and hence it corresponds to the quantum expected value of the operator $\rho$
in eqs.~\eqref{eq:13} and~\eqref{eq:35}, whose full-quantum analogue is
the one in~\eqref{eq:14} in sec.~\ref{sec:gqsm}.

\subsection{The application: transferring uncertainty between the classical
and quantum parts}
\label{sec:transferring}

Consider the following initial distribution evolving under ED:
$$
F_{QC}(0)=\delta(\xi-\xi_0)\delta(\psi-\psi_0).
$$

This system is completely deterministic and, therefore, the Liouville
equation will produce exactly, as a solution, the integral curves of ED
$(\xi(t), \psi(t))$. Thus we can write:
$$
F_{QC}(t)=F_{QC}(\xi(t), \psi(t))=\delta(\xi-\xi(t))\delta(\psi-\psi(t)).
$$

Consider now a slightly more complex system, consituted by a distribution
of $N$ equally probable classical states, and a pure quantum state
at $t = 0$:
\begin{equation}
\label{eq:27}
F_{QC}(0)=\left (\frac 1N \sum_{k=1}^N\delta(\xi-\xi_0^k)\right ) 
  \delta(\psi-\psi_0).
\end{equation}

We can also write the marginal distributions as we did in Section \ref{s:4.A}:
\begin{equation}
\label{eq:29c}
F_C(0)=\frac 1N \sum_{k=1}^N\delta(\xi-\xi_0^k)\ ; \quad F_Q(0)=\delta(\psi-\psi_0).
\end{equation}

The evolution of such a system becomes
\begin{equation}
\label{eq:27b}
F_{QC}(t) =\frac 1N \sum_{k=1}^N
    \delta(\xi-\Phi_\xi^*(\xi_0^k,\psi_0;t))
    \delta(\psi-\Phi_\psi^*(\xi_0^k,\psi_0;t)), 
\end{equation}
where $(\Phi_\xi^*(\xi_0^k,\psi_0;t),\Phi_\psi^*(\xi_0^k,\psi_0;t))$
represents the Ehrenfest trajectory having $(\xi_0^k, \psi_0)$ as initial
condition.

The evolved marginal distribution in the quantum manifold is now:
\begin{equation}
\label{eq:28b}
F_Q(t)=\frac{1}{N} \sum_{k=1}^N\delta (\psi-\Phi_\psi^*(\xi_0^k,
  \psi_0;t)),
\end{equation}
and then, using Eq.~\eqref{eq:21}, we have:
\begin{align}
\label{eq:30}
f_\rho(t)=&\frac 1N \sum_{k=1}^N
  \frac{|\langle \eta| \Phi_\psi^*(\xi_0^k,\psi_0;t)\rangle |^2}
       {\langle \Phi_\psi^*(\xi_0^k,\psi_0;t) |\Phi_\psi^*(\xi_0^k,
  \psi_0;t)\rangle },
\end{align}
where, of course,
\begin{equation}
\label{eq:30b}
f_\rho(0)=| \langle \eta|\psi_0)\rangle|^2.
\end{equation}

The associated density matrix at time $t$ reads:
\begin{align}
\label{eq:31}
\rho(t)=&\frac 1N \sum_{k=1}^N\frac{|\Phi_\psi^*(\xi_0^k,
  \psi_0;t)\rangle
  \langle\Phi_\psi^*(\xi_0^k, \psi_0;t)|}{\langle \Phi_\psi^*(\xi_0^k,
  \psi_0;t) |\Phi_\psi^*(\xi_0^k,
  \psi_0;t)\rangle },
\end{align}
and at time $t=0$:
\begin{equation}
\label{eq:33}
\rho(0)=\frac{|\psi_0\rangle\langle\psi_0|}{\langle\psi_0|\psi_0\rangle}.
\end{equation}

Now, the purity at time $t=0$ is
\begin{equation}
\label{eq:32}
\langle \rho(0) \rangle=\int_{M_Q}d\mu_QF_Q(0) f_\rho(0)=1,
\end{equation}
but at a general time $t$,
\begin{align}
\label{eq:5}
\langle \rho(t) \rangle &=\int_{M_Q}d\mu_QF_Q(t)  f_\rho(t) \nonumber
\\
&= \frac
1N\sum_{k,j=1}^N\frac{\langle \Phi_\psi^*(\xi_0^k, 
      \psi_0;t)|\Phi_\psi^*(\xi_0^j, \psi_0;t) \rangle^2}{\|
  \Phi_\psi^*(\xi_0^k, \psi_0;t)\|^2\|
  \Phi_\psi^*(\xi_0^j, \psi_0;t)\|^2}.
\end{align}

Hence, the purity of the system seems to evolve in time, in general in an
involved way. We can compute, though, its time derivatives, in order to
get an idea about its initial evolution. After some calculations which we
detail in appendix~\ref{app:derivatives}, one can see that, for the initial
state considered in this section, we have
\begin{equation}
\label{eq:34short}
\frac d{dt} \langle \rho(t) \rangle \Big|_{t=0} = 0,
\end{equation}
and
\begin{align}
\label{eq:40short}
\frac{d^2}{dt^2}\langle\rho(t)\rangle\Big|_{t=0}  =& 
 \hbar^2 \langle \psi_0 |(H_e(\xi_0^1)-H_e(\xi_0^2)) |\psi_0\rangle^2
 \nonumber \\
 &- \hbar^2 \langle \psi_0 |(H_e(\xi_0^1)-H_e(\xi_0^2))^2|\psi_0\rangle,
\end{align}
which is negative definite unless the expectation value of
$H_e(\xi_0^1)-H_e(\xi_0^2)$ at $|\psi_0\rangle$ vanishes. This means that the
purity evolves from its initial value, and it does so by decreasing, which is
entirely expected if we realized that we started from a state in which the
purity is maximal.

Thus, we can see that the evolution has made a quantum-pure system become a
quantum-mixed one, and we can claim that:
\begin{theorem}
The statistical Ehrenfest evolution is  purity non-preserving for
Hamiltonians with quantum-classical couplings.  
\end{theorem}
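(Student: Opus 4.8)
The plan is to prove the statement by exhibiting a single family of initial distributions for which the quantum purity $\langle\rho\rangle=\mathrm{Tr}\,\rho^2$ genuinely changes along the Ehrenfest flow. Since ``purity non-preserving'' is just the negation of ``purity preserved for every initial datum,'' one explicit counterexample to preservation is enough. The natural candidate is precisely the distribution already assembled in Sec.~\ref{sec:transferring}: $N$ equally weighted classical points together with one normalized pure quantum state $\psi_0$. For this datum we already know from Eq.~\eqref{eq:32} that $\langle\rho(0)\rangle=1$, so the state starts out quantum-pure, and it suffices to show that $\langle\rho(t)\rangle$ leaves this value. I would take $N=2$ for definiteness, since the derivative formula \eqref{eq:40short} is written in exactly that case; the general $N$ proceeds identically with a sum over pairs.

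Next I would lean entirely on the two derivative identities established for this initial state. Equation~\eqref{eq:34short} shows that the first time derivative of the purity vanishes at $t=0$ --- consistent with the fact that all trajectories leave from the same $\psi_0$, so the leading spread between them is second order --- which forces us to look at the second derivative. The decisive move is to read off the algebraic structure of the right-hand side of \eqref{eq:40short}: writing $\Delta H := H_e(\xi_0^1)-H_e(\xi_0^2)$, it is nothing but minus $\hbar^2$ times a variance,
$$\frac{d^2}{dt^2}\langle\rho(t)\rangle\Big|_{t=0} = -\hbar^2\Big(\langle\psi_0|(\Delta H)^2|\psi_0\rangle-\langle\psi_0|\Delta H|\psi_0\rangle^2\Big)=-\hbar^2\,\mathrm{Var}_{\psi_0}(\Delta H).$$
Being a variance, this quantity is nonnegative, and it is \emph{strictly} positive unless $\psi_0$ is an eigenvector of $\Delta H$.

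From here the conclusion is immediate. If the electronic Hamiltonian carries a genuine quantum-classical coupling, then $H_e(\xi)$ depends nontrivially on the classical configuration, so one may pick the two classical points so that $H_e(\xi_0^1)\neq H_e(\xi_0^2)$ and pick $\psi_0$ not to be an eigenstate of their difference. Then $\mathrm{Var}_{\psi_0}(\Delta H)>0$, the second derivative is strictly negative, and $\langle\rho(t)\rangle$ falls below $1$ for small $t>0$. Hence the initially quantum-pure system becomes quantum-mixed, establishing that the statistical Ehrenfest evolution does not preserve purity. The same formula also explains why the hypothesis on the coupling appears: in its absence, $H_e(\xi_0^1)=H_e(\xi_0^2)$, the variance vanishes, and the purity is unmoved.

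The step I expect to demand the most care is the precise characterization of the degenerate cases. The variance above vanishes exactly when $\psi_0$ is a common eigenvector of all the pairwise differences $H_e(\xi_0^k)-H_e(\xi_0^j)$ entering the full $N$-point sum, equivalently a common eigenvector of all the $H_e(\xi_0^k)$. The substantive content behind the otherwise routine differentiation is therefore the claim that a Hamiltonian ``with quantum-classical coupling'' generically fails this common-eigenvector condition, so that purity loss is the rule rather than the exception. Pinning down this statement --- and making explicit that the coupling is precisely what renders $\Delta H$ a nontrivial operator --- is what must be argued carefully to make the theorem airtight.
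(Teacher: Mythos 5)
Your proposal is correct and follows essentially the same route as the paper: the paper's own argument (Sec.~\ref{sec:transferring} together with Appendix~\ref{app:derivatives}) consists precisely of taking the $N$-point classical / single pure quantum initial distribution of Eq.~\eqref{eq:27}, showing the first time derivative of the purity vanishes at $t=0$, and identifying the second derivative with minus a positive constant times $\langle \psi_0 |(\Delta H)^2|\psi_0\rangle-\langle \psi_0 |\Delta H|\psi_0\rangle^2$ for $\Delta H = H_e(\xi_0^1)-H_e(\xi_0^2)$, which is generically negative when the quantum-classical coupling makes $H_e$ depend on $\xi$. If anything, your reading of that expression as $-\mathrm{Var}_{\psi_0}(\Delta H)$, vanishing exactly when $\psi_0$ is an eigenvector of $\Delta H$, is sharper than the paper's own phrasing (``unless the expectation value of $H_e(\xi_0^1)-H_e(\xi_0^2)$ at $|\psi_0\rangle$ vanishes''), but this is a refinement of the same proof, not a different one.
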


This is the main result of our paper: even though the ED of a
single-trajectory state does preserve purity, when we consider a statistical
state the behavior changes. In the case above (and as we will see
numerically in the next section), we show how it is possible to transfer
uncertainty from the classical domain into the quantum one. Analogously, it is
straightforward to see that an analogous process happens when we consider a
single classical state coupled to an ensemble of quantum
states.  This uncertainty transfer takes place whenever the
dynamics of the quantum and the classical subsystems are coupled to
each other, since such coupling produces a splitting of the
trajectories from the given initial conditions and thus the mixing of
the final states.

\section{Numerical example on a simple Ehrenfest system}
\label{sec:ejemplo}

To illustrate the concepts introduced in the previous section and to see the
purity change in a complex numerical case, let us consider a situation with an
equally probable initial distribution as in Eq.~\eqref{eq:27} of classical
particles with $N=5$ and a quantum part constituted by a 10-level
system.
Such a
system has been used in the literature as a natural framework for molecular
dynamics (see for example Refs.~\onlinecite{Tully1990b,Tully1998Book,Bastida2007JCP}). We consider
a Hamiltonian function for the quantum-classical system of the following form:
\begin{equation}
\label{eq:41}
f_H=J+\langle\psi| A+\epsilon J\cos \theta B|\psi\rangle,
\end{equation}
where $(\theta, J)$ are canonically conjugated classical variables and $A$ and
$B$ are Hermitian matrices acting on the quantum vector space
$\mathbb{C}^{10}$. The classical part is written in action-angle
coordinates to simplify the analysis.

The dynamics of the system is obtained from the solutions of the $N$ different
trajectories with initial conditions defined by each one of elements of the
initial classical distribution $\{ \theta_1(0), \theta_2(0), \theta_3(0),
\theta_4(0), \theta_5(0)\}$. The resulting distribution takes thus the form
given by Eq.~\eqref{eq:27b}. As it can be found in the Supplementary Material,
for all the trajectories presented below the initial conditions for the
classical subsystem are chosen as
 \begin{align*}
\theta_1(0) &= 0.9766548288669266, \\
\theta_2(0) &= 0.5013694871260747, \\
\theta_3(0)&=  0.9052199783160014, \\
\theta_4(0)&= 0.5068075140327187, \\
\theta_5(0) &= 0.9543157645144570 \ .
\end{align*}
The points were fixed as five random points in the interval $[0,1]$. Choosing
different initial conditions leads to equivalent results.

The quantum initial condition is chosen to be, for all five trajectories,
$$
\psi_0=(1, 0, \ldots, 0).
$$

The evolution defines thus a equiprobable distribution of $N$
quantum-classical single trajectories
$(\Phi_\xi^*(\xi_k,\psi_0;t),\Phi_\psi^*(\xi_k,\psi_0;t))\in M_C\times M_Q$.
For each trajectory, the equations of motion are of Ehrenfest type and they
are given by eqs.~\eqref{eq:ehrenfest1}--\eqref{eq:ehrenfest3}, with $H_e = 
A+\epsilon J \cos \theta B$.

These dynamical equations exhibit two different regimes:
\begin{itemize}
\item For $\epsilon=0$ or $J(0)=0$, the classical and the quantum subsystems
  evolve uncoupled. The purity of the quantum subsystem is always
  equal to one.
\item For non-zero coupling constants and classical momenta, the
  behavior of the system depends sharply on the initial conditions:
  the evolution from different initial conditions for the classical subsystems
  is very different. The purity tends then to evolve to the purity
  corresponding to a set of five random projectors on 1-dimensional
  subspaces of the Hilbert space.
\end{itemize}

The evolution of the purity of the resulting system is obtained from
Eq.~\eqref{eq:5} after integrating the dynamics numerically. Some interesting
observations can be extracted from the results:

\begin{itemize}
\item In fig.~\ref{fig:3} we represent the evolution of the purity for a fixed
value of the coupling constant $\epsilon$ and increasing value for the initial
condition of the classical momentum $J_0$. We see how this makes the system
change its originally integrable behavior and become more and more chaotic.

\begin{figure}[!ht]
\centering
\includegraphics[width=8cm]{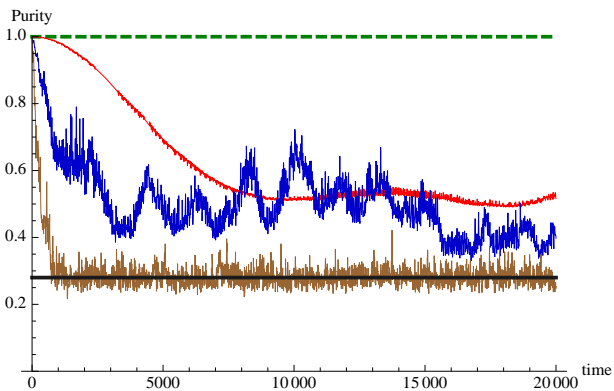}
\caption{Evolution of the purity for $\epsilon=0.1$ and $J_0=0$ (dashed green
line), $J_0=0.3$ (red line), $J_0=1$ (blue line) and $J_0=1.5$ (brown
line).We also depict the reference (black line) of the level of purity of a
distribution of $N=5$ random projectors on $\mathbb{C}^{10}$.}
\label{fig:3}
\end{figure}

\item Instead, we can consider a fixed value of the initial classical momentum
and increase the value of the coupling.  It can be remarked that the system
reaches the level of purity of the set of random projectors much faster than in the previous case, (see fig.~\ref{fig:4}).

\begin{figure}[!ht]
\centering
\includegraphics[width=8cm]{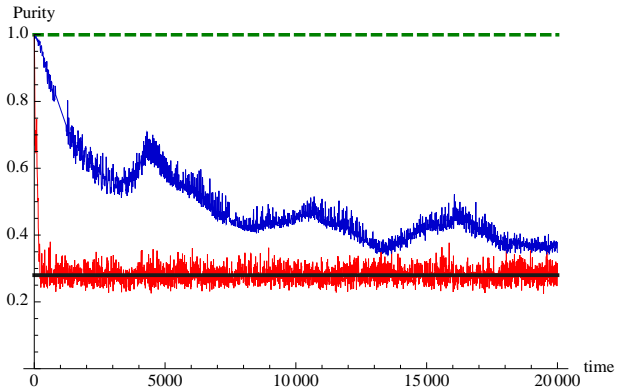}
\caption{Evolution of the purity for $J_0=0.8$ and $\epsilon=0$ (dashed green
line), $\epsilon=0.1$ (blue line), and $\epsilon=0.2$ (red line).  Again, the reference (black line)
represents the level of purity of a distribution of $N=5$ random projectors on
$\mathbb{C}^{10}$.}
\label{fig:4}
\end{figure}
 
\end{itemize}

The interesting behaviour of the purity shown for some values of the
parameters in figs.~\ref{fig:3} (brown line) and~\ref{fig:4} (red line), in
which its value rapidly decreases to a given low one and fluctuates around it
after that, can be explained as a consequence of the  dynamics
mentioned before. If the system exhibits sensitive dependence on the initial
conditions, after a small lapse of time, the different trajectories become
decorrelated from one another in the Hilbert space. Therefore, the density
matrix becomes the normalized sum of $N$ rank one projectors chosen at random.
In this situation, it can be shown (see appendix \ref{app:random}) that the purity is
distributed around an expected value
\begin{equation}
\label{eq:43}
\operatorname{E}[\mathrm{Tr}(\rho^2)] = \frac{N+D-1}{ND},
\end{equation}
which is represented by the black straight lines in the figures, with 
fluctuations of size
\begin{align}
\label{eq:fluctuations}
\sigma  := &\left( \operatorname{E}[\mathrm{Tr}(\rho^2)^2]
 - \operatorname{E}[\mathrm{Tr}(\rho^2)]^2\right)^{1/2}  \nonumber \\
 = &
\frac{\sqrt{2}}{ND}\left(\frac{(N-1)(D-1)}{N(D+1)}\right)^{1/2},
\end{align}
where $N$ is the total number of trajectories used and $D$ is the dimension of
the quantum Hilbert space. Notice that if $D$ is very large, the expected
value for the purity tends to its minimal value $1/N$ and the fluctuations
tend to zero. Remember that the degree of mixture of a quantum system
ranges from the pure state case (i.e., purity equals to one), for
which the density operator is a projector on a one dimensional
subspace of the Hilbert space and the maximal mixture case (thus minimum
purity) which   corresponds to a density operator which is
proportional to the identity matrix. As it must have trace equal to
one, the proportionality factor is equal to the inverse of the
dimension of the Hilbert space.

Figs.~\ref{fig:3} and~\ref{fig:4} show also the effect of the coupling between
the classical and the quantum subsystems, as well as the effect of the
momentum (or equivalently, the energy) of the classical particles on the
evolution of purity. The greater the strength of the coupling and the energy
of the classical particle, the faster the system reaches its asymptotic
behavior. Although clearly the coupling has a stronger effect. Thus,
analogously to what happens in the case of molecular dynamics, where the
velocity of the classical nuclei induces the coupling between all the
eigenstates of the electronic Hamiltonian, we see here that it also has the
effect of mixing the quantum part if the Ehrenfest system is treated
statistically.

From our analysis and example, we can then conclude that statistical Ehrenfest
dynamics provides a framework in which the evolution affects the quantum
dynamics in at least one way decoherence does. Changing the degree of mixture
of the quantum state is certainly one of the most relevant effects of
electronic decoherence on molecular systems, although further work is required
to analyze whether or not other decoherence effects can be explained by our
construction.

\section{Conclusions and future work}
\label{sec:concl-future-work}

An appropriate description of the electronic decoherence in molecular systems
is on the wishlist of every quantum-classical dynamics scheme. In which amount
each theoretical model includes the sought effects is a complicated question
whose answer will depend both on the model and on the intended application.
Loosely speaking, we could expect the different models to range from `no
electronic decoherence at all' (e.g., BOMD) to `a perfect description of
quantum electronic decoherence' (say, full quantum dynamics of electrons and
nuclei), with most of them lying somewhere in between the two extremes.
 Several methods \cite{Truhlar2007Book, Zhu2005JCTC,
Prezhdo1999JCP, Subotnik2010JCP, Horsfield2006RPP, Tully1990JCP, Tully1990b,
Tully1998Book, Landry2011JCP,Bedard2005JCP,SunWangMiller:1998,SubShen:2011}
have been developed to deal with the problem but, to our knowledge, without a
definitive solution.

Pure Ehrenfest dynamics (ED for single states) clearly lies in the `no
decoherence at all' side of the spectrum, since it does not even allow the
change of purity. However, if we are willing to accept the possibility that
the electrons evolve into a mixed state, i.e., a statistically uncertain
electronic state, then, to perform a coherent analysis, we should also allow
the initial conditions of the nuclei to be described statistically. This point
of view was used in some of the papers mentioned above when considered within
surface-hopping or decay-of-mixing formalisms. In this work we implement the
idea using the statistical description of Ehrenfest dynamics introduced in
Ref.~\onlinecite{Alonso2011JPA} combined with a specially convenient geometric
formalism we introduce. Using these methods, one can observe that a system
starting from uncertain nuclear initial conditions and a pure electronic state
evolves into a situation in which the electronic state is no longer pure but
mixed, i.e., ED is capable of transferring uncertainty between the nuclei and
the electrons. Besides, our method provides us also with tools to compute in a
simple but rigorous way the purity change, and it indicates an interesting
dependence of the effect on the number of trajectories $N$ and the dimension
of the Hilbert space $D$, as well as on the coupling and the velocity (or
energy) of the classical system. `How much' decoherence the statistical ED
contains, as related to interesting practical applications or to other mixed
quantum-classical schemes,\cite{Bittner1997,Bittner1995} is a complex and
important question that we shall explore in future works.

\section*{Acknowledgements}

We thank A. Castro for many fruitful discussions on these topics and
many interesting suggestions on previous versions of the manuscript.

This work has been supported by the research grants E24/1, 24/2 and E24/3 (DGA, Spain),
FIS2009-13364-C02-01,  FPA2009-09638 and Fis2009-12648 (MICINN,
Spain), and ARAID and Ibercaja grant for young researchers (Spain).

\appendix

\section{Purity of a sum of random projectors of rank one}
\label{app:random}

In this appendix we derive the expected value of the purity and its
fluctuations when the density matrix is obtained from the sum of decorrelated,
random, rank-one projectors.

The purity for a density matrix of the form
$$
\rho=\frac1N\sum_{j=1}^N\frac{|\psi_j\rangle\langle\psi_j|}
{\langle\psi_j|\psi_j\rangle}
$$
is
$$
{\rm Tr}(\rho^2)=\frac1{N^2}\left ( N+2\sum_{j<k} \chi_{jk}\right),
$$
where we have defined
$$\chi_{jk}:=\frac{|\langle\psi_j|\psi_k\rangle|^2}
{\langle\psi_j|\psi_j\rangle\langle\psi_k|\psi_k\rangle}.
$$

Next, we will determine the probability density for $\chi_{12}$ when $\psi_1$
and $\psi_2$ are two decorrelated random vectors.

Given the global $U(N)$ symmetry of the problem we can take the first vector 
to be
$$
\psi_1=(a,0,\cdots, 0),\quad a\in\R,
$$
and the second chosen at random. If we denote
$$
\psi_2=(q_1+ip_1,q_2+ip_2,\dots,q_N+ip_N),
$$
we get 
$$
\chi:=\chi_{12}=\frac{(q_1^2+p_1^2)}{r^2},
$$
where $r^2:=\sum_j (q_j^2+p_j^2)$.

We also need the adequate probability measure in $M_Q$ that distributes the
random vector $\psi_2$. It can be defined by
$$
d\pi_Q=f(r^2)d\mu_Q,
$$
where $f$ is a positive function chosen so that 
$$
\int_{M_Q}d\pi _Q=1.
$$
As we shall see, the actual form of $f$ is not relevant
for the distribution of the purity.

It will be convenient to write the probability measure in 
$M_Q=\R^2\times\R^{2D-2}$ in the following way:
$$
d\pi_Q=f(z^2+R^2) z dz d\theta R^{2D-3}dR d\Omega_{2D-3},
$$
where $(z,\theta)$ represents polar coordinates in the plane $(q_1,p_1)$, $R$
is the radial coordinate in $\R^{2D-2}$ (i.e., $R^2:=\sum_{j>1}
(q_j^2+p_j^2)$), while the volume element $d\Omega_{2D-3}$ stands for the
angular coordinates in $\R^{2D-2}$. In these coordinates $r=\sqrt{z^2+R^2}$
and $\chi=z^2/r^2$.

The next step is to perform the change of variables from $(R,z)$ to
$(r,\chi)$. Taking into account that the Jacobian is
$$
J=\frac r {2\sqrt{\chi(1-\chi)}},
$$
one obtains
$$
d\pi_Q= \frac12 f(r^2)(1-\chi)^{D-2}d\chi 
r^{2D-1}dr d\theta d\Omega_{2D-3},
$$
and marginalizing out all variables except $\chi$ we find the
needed measure:
$$
d\pi_\chi=(D-1)(1-\chi)^{D-2} d\chi.
$$

Once we have determined the probability distribution for $\chi$ we can compute
the average value for the purity. Using the expression at the beginning of
this appendix,
\begin{equation}
\operatorname{E}[ \mathrm{Tr}(\rho^2)] = 
\frac1{N^2}\left ( N+2\sum_{j<k} \operatorname{E}[\chi_{ij}]\right)
 = 
\frac1{N}\Big( 1+(N-1) \operatorname{E}[\chi]\Big),
\end{equation}
where we have used that all random variables $\chi_{ij}$ are 
identically distributed.

Finally, given that 
$$ 
\operatorname{E}[\chi]=\int_0^1\chi d\pi_\chi=\frac1D,
$$
we obtain the sought result:
$$
\operatorname{E}[ \mathrm{Tr}(\rho^2)]=\frac{N+D-1}{ND}.
$$

As for the fluctuation, one has
\begin{align}
\sigma^2 & = 
 \frac4{N^4} \sum_{j < k} 
 (\operatorname{E}[\chi_{jk}^2] - \operatorname{E}[\chi_{jk}]^2) 
\nonumber \\ 
&= 
 \frac{2(N-1)}{N^3}
 (\operatorname{E}[\chi^2] - \operatorname{E}[\chi]^2) 
=
 \frac{2(N-1)}{N^3}\frac{(D-1)}{D^2(D+1)},
\end{align}
where we have used that
$$
\operatorname{E}[\chi^2]=\frac2{D(D+1)}.
$$

\section{Derivatives of the purity}
\label{app:derivatives}

In sec.~\ref{sec:transferring}, we explained how an initial uncertainty in the
classical part of the state of an Ehrenfest system produces a change in the
purity at $t=0$, even if the initial state is quantum-pure. In this appendix,
we present in more detail the calculations that led us to that conclusion.

First of all, we need the first time-derivative of the purity. We know
that the evolution of the $\rho(t)$ is given by Eq.~\eqref{eq:24}.
Then, the evolution of the purity can be written as:
\begin{align}
\label{eq:34}
\frac d{dt} \langle \rho(t) \rangle =& 
 2\mathrm{Tr}\big( \dot{\rho}(t) \rho(t) \big)\nonumber \\ 
=& 
\int d\mu d\mu' F F'
   \mathrm{Tr}\big( i\hbar^{-1} [ P_\psi, H_e ] \cdot P_{\psi'} \big)
   \nonumber \\
 = & i\hbar^{-1} \int d\mu d\mu' F F'
   \mathrm{Tr}\big( [ P_\psi, P_{\psi'} ] \cdot H_e \big),
\end{align}
where the integral is
taken over $(M_Q \times M_C)^2$, and we have denoted
{\allowdisplaybreaks
\begin{subequations}
\label{eq:abb}
\begin{align}
d\mu & := d\mu_{QC}(\xi,\psi), \label{eq:abb_a} \\
d\mu' & := d\mu_{QC}(\xi',\psi'), \label{eq:abb_b} \\
F & := F_{QC}(\xi,\psi), \label{eq:abb_c} \\
F' & := F_{QC}(\xi',\psi'), \label{eq:abb_d} \\
H_e & := H_e(\xi), \label{eq:abb_e} \\
H'_e & := H_e(\xi'), \label{eq:abb_f} \\
P_\psi & := \frac{|\psi\rangle\langle\psi|}{\langle\psi|\psi\rangle},  \label{eq:abb_g} \\
P_{\psi'} & := \frac{|\psi'\rangle\langle\psi'|}{\langle\psi'|\psi'\rangle}.  \label{eq:abb_h}
\end{align}
In the last step, we also used that
\begin{align*}
\mathrm{Tr} \big([ P_\psi, H_e ] \cdot P_{\psi'}  \big) &=\mathrm{Tr}
\big(  P_\psi\cdot H_e\cdot P_{\psi'} -  H_e\cdot P_\psi\cdot
P_{\psi'} \big)  \\
&=\mathrm{Tr}
\big(   H_e\cdot P_{\psi'}\cdot P_\psi -  H_e\cdot P_\psi\cdot
P_{\psi'} \big)\\
&=\mathrm{Tr}\big( H_e\cdot [ P_\psi, P_{\psi'} ] \big) \\
&=
\mathrm{Tr}\big( [ P_\psi, P_{\psi'} ] \cdot H_e \big).
 \end{align*}

\end{subequations}
}

Also, as it is common in statistical dynamics, we can assign the
time-evolution of the state to the probability distribution $F_{QC}$ and see
the objects $\xi$, $\xi'$ and $|\psi\rangle$, $|\psi'\rangle$ just as the
initial conditions, or we can alternatively think that $F_{QC}$ is the static
distribution of initial conditions and consider that the time-evolving objects
are $\xi$, $\xi'$ and $|\psi\rangle$, $|\psi'\rangle$. Either dynamical image
is valid, and the two of them produce, of course, the same result, but we have
performed the calculation thinking in the second way, which looked to us
slightly more direct.

Now, in our example, $P_\psi(t=0)=P_{\psi'}(t=0)=P_{\psi_0}$. Then, we see that
the commutator $[ P_\psi, P_{\psi'} ]$ vanishes. 
Thus, we can
conclude that:
\begin{equation}
\label{eq:34shortbis}
\frac d{dt} \langle \rho(t) \rangle \Big|_{t=0} = 0.
\end{equation}

Using again Eq.~\eqref{eq:ehrenfest1}-\eqref{eq:ehrenfest3}, we can also compute the second derivative of
the density matrix,
\begin{align}
\label{eq:37}
\ddot{\rho}(t) =
  \int d\mu F &\left ( i\hbar^{-1} 
  \big( \big[ \{ H_e, f_H \}_C, P_\psi \big] \big) \right .  \nonumber \\
  &- \hbar^{-2} \left .\big( \big[ H_e, \big[ H_e, P_\psi ] \big] \big) \right),
\end{align}
where we have used the same notation in eqs.~\eqref{eq:abb}, and the integral
is this time extended to $M_Q \times M_C$. With this expression, we can 
compute the second derivative of the purity:
\begin{equation}
\label{eq:38}
\frac{d^2}{dt^2}\langle\rho(t)\rangle=\mathrm{Tr}\Big(\rho(t)\ddot
  \rho(t)+(\dot \rho(t))^2\Big).
\end{equation}

Now, using eqs.~\eqref{eq:24}, \eqref{eq:34} and~\eqref{eq:37} we can
calculate a more explicit form for this second derivatives in terms of the
objects associated to the geometric formalism
\begin{eqnarray}
\label{eq:399}
\frac{d^2}{dt^2} \langle\rho(t)\rangle = 
\int &d\mu& d\mu' F F' 
\left[ i\hbar^{-1}
 \mathrm{Tr} \Big( \big[ \{ H_e, f_H \} , P_\psi \big] \cdot P_{\psi'}
 \Big)
\right .\nonumber \\
 &-& \hbar ^{-2} \mathrm{Tr} \Big( \big[ H_e, [H_e, P_\psi] \big] 
  \cdot P_{\psi'} \Big)  \nonumber \\
 &-& \left .\hbar ^{-2} \mathrm{Tr} \Big( \big[H_e, P_\psi\big]
  \cdot \big[H'_e, P_{\psi'}\big] \Big) \right] \nonumber \\
  =  
\int &d\mu& d\mu' F F' 
\left[ i\hbar^{-1} 
 \mathrm{Tr} \Big( \big[ \{ H_e , f_H \} , P_\psi \big] \cdot
 P_{\psi'} \Big)
\right .\nonumber \\
& +&  \hbar ^{-2} \mathrm{Tr} \Big( \big[H_e, P_\psi\big]
  \cdot \big[H_e, P_{\psi'}\big] \Big)  \\
 &-& \left .
  \hbar ^{-2} \mathrm{Tr} \Big( \big[H_e, P_\psi\big]
  \cdot \big[H'_e, P_{\psi'}\big] \Big) \right] \nonumber \\
  =  
\int &d\mu& d\mu' F F' 
\left[ i\hbar^{-1} 
\mathrm{Tr} \Big( \big[ \{ H_e , f_H \} , P_\psi \big] \cdot P_{\psi'}
\Big)
\right . \nonumber \\
&+& \left .\hbar ^{-2} \mathrm{Tr} \Big( \big[H_e, P_\psi\big]
 \cdot \big[(H_e - H'_e), P_{\psi'}\big] \Big) \right],\nonumber
\end{eqnarray}
where we used that 
$$
\mathrm{Tr} \Big( \big[ H_e, [H_e, P_\psi] \big] 
  \cdot P_{\psi'} \Big) =
 \mathrm{Tr} \Big( \big[H_e, P_\psi\big]
  \cdot \big[H_e, P_{\psi'}\big] \Big).
$$

Using this expression, it is finally straightforward to compute the second
derivative at $t=0$ for the distribution $F_{QC}$ given by Eq.~\eqref{eq:27b}:
\begin{align}
\label{eq:40}
\frac{d^2}{dt^2}\langle\rho(t)\rangle \Big|_{t=0} =&
 - \frac{\hbar^{-2}}2 \mathrm{Tr} \left( \big[ H_e(\xi_0^1) - H_e(\xi_0^2) ,
    P_{\psi_0} \big]^2 \right) \nonumber \\
 =&-  \hbar^{-2} \langle \psi_0 |\big( H_e(\xi_0^1) - H_e(\xi_0^2) \big)^2
    |\psi_0\rangle \nonumber \\
 &+ \hbar^{-2} \langle \psi_0 |\big( H_e(\xi_0^1) - H_e(\xi_0^2) \big)  
    |\psi_0\rangle^2.
\end{align}


\begin{thebibliography}{38}%
\makeatletter
\providecommand \@ifxundefined [1]{%
 \@ifx{#1\undefined}
}%
\providecommand \@ifnum [1]{%
 \ifnum #1\expandafter \@firstoftwo
 \else \expandafter \@secondoftwo
 \fi
}%
\providecommand \@ifx [1]{%
 \ifx #1\expandafter \@firstoftwo
 \else \expandafter \@secondoftwo
 \fi
}%
\providecommand \natexlab [1]{#1}%
\providecommand \enquote  [1]{``#1''}%
\providecommand \bibnamefont  [1]{#1}%
\providecommand \bibfnamefont [1]{#1}%
\providecommand \citenamefont [1]{#1}%
\providecommand \href@noop [0]{\@secondoftwo}%
\providecommand \href [0]{\begingroup \@sanitize@url \@href}%
\providecommand \@href[1]{\@@startlink{#1}\@@href}%
\providecommand \@@href[1]{\endgroup#1\@@endlink}%
\providecommand \@sanitize@url [0]{\catcode `\\12\catcode `\$12\catcode
  `\&12\catcode `\#12\catcode `\^12\catcode `\_12\catcode `\%12\relax}%
\providecommand \@@startlink[1]{}%
\providecommand \@@endlink[0]{}%
\providecommand \url  [0]{\begingroup\@sanitize@url \@url }%
\providecommand \@url [1]{\endgroup\@href {#1}{\urlprefix }}%
\providecommand \urlprefix  [0]{URL }%
\providecommand \Eprint [0]{\href }%
\providecommand \doibase [0]{http://dx.doi.org/}%
\providecommand \selectlanguage [0]{\@gobble}%
\providecommand \bibinfo  [0]{\@secondoftwo}%
\providecommand \bibfield  [0]{\@secondoftwo}%
\providecommand \translation [1]{[#1]}%
\providecommand \BibitemOpen [0]{}%
\providecommand \bibitemStop [0]{}%
\providecommand \bibitemNoStop [0]{.\EOS\space}%
\providecommand \EOS [0]{\spacefactor3000\relax}%
\providecommand \BibitemShut  [1]{\csname bibitem#1\endcsname}%
\let\auto@bib@innerbib\@empty
\bibitem [{\citenamefont {Alonso}\ \emph {et~al.}(2011)\citenamefont {Alonso},
  \citenamefont {Castro}, \citenamefont {Clemente-Gallardo}, \citenamefont
  {{Cuch\'\i}}, \citenamefont {Echenique},\ and\ \citenamefont
  {Falceto}}]{Alonso2011JPA}%
  \BibitemOpen
  \bibfield  {author} {\bibinfo {author} {\bibfnamefont {J.~L.}\ \bibnamefont
  {Alonso}}, \bibinfo {author} {\bibfnamefont {A.}~\bibnamefont {Castro}},
  \bibinfo {author} {\bibfnamefont {J.}~\bibnamefont {Clemente-Gallardo}},
  \bibinfo {author} {\bibfnamefont {J.~C.}\ \bibnamefont {{Cuch\'\i}}},
  \bibinfo {author} {\bibfnamefont {P.}~\bibnamefont {Echenique}}, \ and\
  \bibinfo {author} {\bibfnamefont {F.}~\bibnamefont {Falceto}},\ }\href@noop
  {} {\bibfield  {journal} {\bibinfo  {journal} {J. Phys. A: Math. Theor.}\
  }\textbf {\bibinfo {volume} {44}},\ \bibinfo {pages} {396004} (\bibinfo
  {year} {2011})}\BibitemShut {NoStop}%
\bibitem [{\citenamefont {Andrade}\ \emph {et~al.}(2009)\citenamefont
  {Andrade}, \citenamefont {Castro}, \citenamefont {Zueco}, \citenamefont
  {Alonso}, \citenamefont {Echenique}, \citenamefont {Falceto},\ and\
  \citenamefont {Rubio}}]{Andrade2009JCTC}%
  \BibitemOpen
  \bibfield  {author} {\bibinfo {author} {\bibfnamefont {X.}~\bibnamefont
  {Andrade}}, \bibinfo {author} {\bibfnamefont {A.}~\bibnamefont {Castro}},
  \bibinfo {author} {\bibfnamefont {D.}~\bibnamefont {Zueco}}, \bibinfo
  {author} {\bibfnamefont {J.~L.}\ \bibnamefont {Alonso}}, \bibinfo {author}
  {\bibfnamefont {P.}~\bibnamefont {Echenique}}, \bibinfo {author}
  {\bibfnamefont {F.}~\bibnamefont {Falceto}}, \ and\ \bibinfo {author}
  {\bibfnamefont {A.}~\bibnamefont {Rubio}},\ }\href@noop {} {\bibfield
  {journal} {\bibinfo  {journal} {J. Chem. Theor. Comp.}\ }\textbf {\bibinfo
  {volume} {5}},\ \bibinfo {pages} {728} (\bibinfo {year} {2009})}\BibitemShut
  {NoStop}%
\bibitem [{\citenamefont {Tully}(1990{\natexlab{a}})}]{Tully1990JCP}%
  \BibitemOpen
  \bibfield  {author} {\bibinfo {author} {\bibfnamefont {J.~C.}\ \bibnamefont
  {Tully}},\ }\href@noop {} {\bibfield  {journal} {\bibinfo  {journal} {J.
  Chem. Phys.}\ }\textbf {\bibinfo {volume} {93}},\ \bibinfo {pages} {1061}
  (\bibinfo {year} {1990}{\natexlab{a}})}\BibitemShut {NoStop}%
\bibitem [{\citenamefont {Zhu}, \citenamefont {Jasper},\ and\ \citenamefont
  {Truhlar}(2005)}]{Zhu2005JCTC}%
  \BibitemOpen
  \bibfield  {author} {\bibinfo {author} {\bibfnamefont {C.}~\bibnamefont
  {Zhu}}, \bibinfo {author} {\bibfnamefont {A.~W.}\ \bibnamefont {Jasper}}, \
  and\ \bibinfo {author} {\bibfnamefont {D.~G.}\ \bibnamefont {Truhlar}},\
  }\href@noop {} {\bibfield  {journal} {\bibinfo  {journal} {J. Chem. Theor.
  Comp.}\ }\textbf {\bibinfo {volume} {1}},\ \bibinfo {pages} {527} (\bibinfo
  {year} {2005})}\BibitemShut {NoStop}%
\bibitem [{\citenamefont {Truhlar}(2007)}]{Truhlar2007Book}%
  \BibitemOpen
  \bibfield  {author} {\bibinfo {author} {\bibfnamefont {D.~G.}\ \bibnamefont
  {Truhlar}},\ }\enquote {\bibinfo {title} {Decoherence in combined quantum
  mechanical and classical mechanical methods for dynamics as illustrated for
  non--{Born}--{Oppenheimer} trajectories},}\ in\ \href@noop {} {\emph
  {\bibinfo {booktitle} {Quantum Dynamics of Complex Molecular Systems}}},\
  \bibinfo {editor} {edited by\ \bibinfo {editor} {\bibnamefont {{{D. A.} Micha
  and I. Burghardt}}}}\ (\bibinfo  {publisher} {Springer, Berlin},\ \bibinfo
  {year} {2007})\ pp.\ \bibinfo {pages} {227--243}\BibitemShut {NoStop}%
\bibitem [{\citenamefont {Prezhdo}(1999)}]{Prezhdo1999JCP}%
  \BibitemOpen
  \bibfield  {author} {\bibinfo {author} {\bibfnamefont {O.~V.}\ \bibnamefont
  {Prezhdo}},\ }\href@noop {} {\bibfield  {journal} {\bibinfo  {journal} {J.
  Chem. Phys.}\ }\textbf {\bibinfo {volume} {111}},\ \bibinfo {pages} {8366}
  (\bibinfo {year} {1999})}\BibitemShut {NoStop}%
\bibitem [{\citenamefont {Subotnik}(2010)}]{Subotnik2010JCP}%
  \BibitemOpen
  \bibfield  {author} {\bibinfo {author} {\bibfnamefont {J.~E.}\ \bibnamefont
  {Subotnik}},\ }\href@noop {} {\bibfield  {journal} {\bibinfo  {journal} {J.
  Chem. Phys.}\ }\textbf {\bibinfo {volume} {132}},\ \bibinfo {pages} {134112}
  (\bibinfo {year} {2010})}\BibitemShut {NoStop}%
\bibitem [{\citenamefont {Horsfield}\ \emph {et~al.}(2006)\citenamefont
  {Horsfield}, \citenamefont {Bowler}, \citenamefont {Ness}, \citenamefont
  {{S\'anchez}}, \citenamefont {Todorov},\ and\ \citenamefont
  {Fisher}}]{Horsfield2006RPP}%
  \BibitemOpen
  \bibfield  {author} {\bibinfo {author} {\bibfnamefont {A.~P.}\ \bibnamefont
  {Horsfield}}, \bibinfo {author} {\bibfnamefont {D.~R.}\ \bibnamefont
  {Bowler}}, \bibinfo {author} {\bibfnamefont {H.}~\bibnamefont {Ness}},
  \bibinfo {author} {\bibfnamefont {C.~G.}\ \bibnamefont {{S\'anchez}}},
  \bibinfo {author} {\bibfnamefont {T.~N.}\ \bibnamefont {Todorov}}, \ and\
  \bibinfo {author} {\bibfnamefont {A.~J.}\ \bibnamefont {Fisher}},\
  }\href@noop {} {\bibfield  {journal} {\bibinfo  {journal} {Rep. prog. Phys.}\
  }\textbf {\bibinfo {volume} {69}},\ \bibinfo {pages} {1195} (\bibinfo {year}
  {2006})}\BibitemShut {NoStop}%
\bibitem [{\citenamefont {Tully}(1990{\natexlab{b}})}]{Tully1990b}%
  \BibitemOpen
  \bibfield  {author} {\bibinfo {author} {\bibfnamefont {J.~C.}\ \bibnamefont
  {Tully}},\ }\enquote {\bibinfo {title} {Nonadiabatic dynamics},}\ in\
  \href@noop {} {\emph {\bibinfo {booktitle} {Modern Methods for
  Multidimensional Dynamics Computations in Chemistry}}},\ \bibinfo {editor}
  {edited by\ \bibinfo {editor} {\bibfnamefont {D.}~\bibnamefont {Thompson}}}\
  (\bibinfo  {publisher} {World Scientific, Singapore},\ \bibinfo {year}
  {1990})\ p.~\bibinfo {pages} {34}\BibitemShut {NoStop}%
\bibitem [{\citenamefont {Tully}(1998)}]{Tully1998Book}%
  \BibitemOpen
  \bibfield  {author} {\bibinfo {author} {\bibfnamefont {J.~C.}\ \bibnamefont
  {Tully}},\ }\enquote {\bibinfo {title} {Mixed quantum-classical dynamics:
  Mean-field and surface-hopping},}\ in\ \href@noop {} {\emph {\bibinfo
  {booktitle} {Classical and Quantum Dynamics in Condensed Phase
  Simulation}}},\ \bibinfo {editor} {edited by\ \bibinfo {editor}
  {\bibfnamefont {B.~G.}\ \bibnamefont {Berne}}, \bibinfo {editor}
  {\bibfnamefont {G.}~\bibnamefont {Ciccotti}}, \ and\ \bibinfo {editor}
  {\bibfnamefont {D.~F.}\ \bibnamefont {Coker}}}\ (\bibinfo  {publisher} {World
  Scientific, Singapore},\ \bibinfo {year} {1998})\ pp.\ \bibinfo {pages}
  {489--515}\BibitemShut {NoStop}%
\bibitem [{\citenamefont {Landry}\ and\ \citenamefont
  {Subotnik}(2011)}]{Landry2011JCP}%
  \BibitemOpen
  \bibfield  {author} {\bibinfo {author} {\bibfnamefont {B.~R.}\ \bibnamefont
  {Landry}}\ and\ \bibinfo {author} {\bibfnamefont {J.~E.}\ \bibnamefont
  {Subotnik}},\ }\href@noop {} {\bibfield  {journal} {\bibinfo  {journal} {J.
  Chem. Phys.}\ }\textbf {\bibinfo {volume} {135}},\ \bibinfo {pages} {191191}
  (\bibinfo {year} {2011})}\BibitemShut {NoStop}%
\bibitem [{\citenamefont {{Bedard--Hearn}}, \citenamefont {Larsen},\ and\
  \citenamefont {Schwartz}(2005)}]{Bedard2005JCP}%
  \BibitemOpen
  \bibfield  {author} {\bibinfo {author} {\bibfnamefont {M.~J.}\ \bibnamefont
  {{Bedard--Hearn}}}, \bibinfo {author} {\bibfnamefont {R.~E.}\ \bibnamefont
  {Larsen}}, \ and\ \bibinfo {author} {\bibfnamefont {B.~J.}\ \bibnamefont
  {Schwartz}},\ }\href@noop {} {\bibfield  {journal} {\bibinfo  {journal} {J.
  Chem. Phys.}\ }\textbf {\bibinfo {volume} {123}},\ \bibinfo {pages} {234106}
  (\bibinfo {year} {2005})}\BibitemShut {NoStop}%
\bibitem [{\citenamefont {Sun}, \citenamefont {Wang},\ and\ \citenamefont
  {Miller}(1998)}]{SunWangMiller:1998}%
  \BibitemOpen
  \bibfield  {author} {\bibinfo {author} {\bibfnamefont {X.}~\bibnamefont
  {Sun}}, \bibinfo {author} {\bibfnamefont {H.}~\bibnamefont {Wang}}, \ and\
  \bibinfo {author} {\bibfnamefont {W.~H.}\ \bibnamefont {Miller}},\
  }\href@noop {} {\bibfield  {journal} {\bibinfo  {journal} {J. Chem. Phys.}\
  }\textbf {\bibinfo {volume} {109}},\ \bibinfo {pages} {7064} (\bibinfo {year}
  {1998})}\BibitemShut {NoStop}%
\bibitem [{\citenamefont {Subotnik}\ and\ \citenamefont
  {Shenvi}(2011)}]{SubShen:2011}%
  \BibitemOpen
  \bibfield  {author} {\bibinfo {author} {\bibfnamefont {J.~E.}\ \bibnamefont
  {Subotnik}}\ and\ \bibinfo {author} {\bibfnamefont {N.}~\bibnamefont
  {Shenvi}},\ }\href@noop {} {\bibfield  {journal} {\bibinfo  {journal} {J.
  Chem. Phys.}\ }\textbf {\bibinfo {volume} {134}},\ \bibinfo {pages} {244114}
  (\bibinfo {year} {2011})}\BibitemShut {NoStop}%
\bibitem [{\citenamefont {Schlosshauer}(2007)}]{Schlosshauer2007}%
  \BibitemOpen
  \bibfield  {author} {\bibinfo {author} {\bibfnamefont {M.}~\bibnamefont
  {Schlosshauer}},\ }\href@noop {} {\emph {\bibinfo {title} {Decoherence and
  the Quantum-to-Classical Transition}}}\ (\bibinfo  {publisher} {Springer},\
  \bibinfo {year} {2007})\BibitemShut {NoStop}%
\bibitem [{\citenamefont {Zurek}(2003)}]{Zurek2003RMP}%
  \BibitemOpen
  \bibfield  {author} {\bibinfo {author} {\bibfnamefont {W.~H.}\ \bibnamefont
  {Zurek}},\ }\href@noop {} {\bibfield  {journal} {\bibinfo  {journal} {Rev.
  Mod. Phys.}\ }\textbf {\bibinfo {volume} {75}},\ \bibinfo {pages} {716}
  (\bibinfo {year} {2003})}\BibitemShut {NoStop}%
\bibitem [{\citenamefont {Mauri}, \citenamefont {Car},\ and\ \citenamefont
  {Tossati}(1993)}]{Mauri1993EL}%
  \BibitemOpen
  \bibfield  {author} {\bibinfo {author} {\bibfnamefont {F.}~\bibnamefont
  {Mauri}}, \bibinfo {author} {\bibfnamefont {R.}~\bibnamefont {Car}}, \ and\
  \bibinfo {author} {\bibfnamefont {E.}~\bibnamefont {Tossati}},\ }\href@noop
  {} {\bibfield  {journal} {\bibinfo  {journal} {Europhys. Lett.}\ }\textbf
  {\bibinfo {volume} {24}},\ \bibinfo {pages} {431} (\bibinfo {year}
  {1993})}\BibitemShut {NoStop}%
\bibitem [{\citenamefont {Parandekar}\ and\ \citenamefont
  {Tully}(2006)}]{Parandekar2006JCTC}%
  \BibitemOpen
  \bibfield  {author} {\bibinfo {author} {\bibfnamefont {P.~V.}\ \bibnamefont
  {Parandekar}}\ and\ \bibinfo {author} {\bibfnamefont {J.~C.}\ \bibnamefont
  {Tully}},\ }\href@noop {} {\bibfield  {journal} {\bibinfo  {journal} {J.
  Chem. Theor. Comp.}\ }\textbf {\bibinfo {volume} {2}},\ \bibinfo {pages}
  {229} (\bibinfo {year} {2006})}\BibitemShut {NoStop}%
\bibitem [{\citenamefont {Parandekar}\ and\ \citenamefont
  {Tully}(2005)}]{Parandekar2005JCP}%
  \BibitemOpen
  \bibfield  {author} {\bibinfo {author} {\bibfnamefont {P.~V.}\ \bibnamefont
  {Parandekar}}\ and\ \bibinfo {author} {\bibfnamefont {J.~C.}\ \bibnamefont
  {Tully}},\ }\href@noop {} {\bibfield  {journal} {\bibinfo  {journal} {J.
  Chem. Phys.}\ }\textbf {\bibinfo {volume} {122}},\ \bibinfo {pages} {094102}
  (\bibinfo {year} {2005})}\BibitemShut {NoStop}%
\bibitem [{\citenamefont {Schmidt}, \citenamefont {Parandekar},\ and\
  \citenamefont {Tully}(2008)}]{Schmidt2008JCP}%
  \BibitemOpen
  \bibfield  {author} {\bibinfo {author} {\bibfnamefont {J.~R.}\ \bibnamefont
  {Schmidt}}, \bibinfo {author} {\bibfnamefont {P.~V.}\ \bibnamefont
  {Parandekar}}, \ and\ \bibinfo {author} {\bibfnamefont {J.~C.}\ \bibnamefont
  {Tully}},\ }\href@noop {} {\bibfield  {journal} {\bibinfo  {journal} {J.
  Chem. Phys.}\ }\textbf {\bibinfo {volume} {129}},\ \bibinfo {pages} {044104}
  (\bibinfo {year} {2008})}\BibitemShut {NoStop}%
\bibitem [{\citenamefont {Bastida}\ \emph {et~al.}(2007)\citenamefont
  {Bastida}, \citenamefont {Cruz}, \citenamefont {{Z\'u\~niga}}, \citenamefont
  {Requena},\ and\ \citenamefont {Miguel}}]{Bastida2007JCP}%
  \BibitemOpen
  \bibfield  {author} {\bibinfo {author} {\bibfnamefont {A.}~\bibnamefont
  {Bastida}}, \bibinfo {author} {\bibfnamefont {C.}~\bibnamefont {Cruz}},
  \bibinfo {author} {\bibfnamefont {J.}~\bibnamefont {{Z\'u\~niga}}}, \bibinfo
  {author} {\bibfnamefont {A.}~\bibnamefont {Requena}}, \ and\ \bibinfo
  {author} {\bibfnamefont {B.}~\bibnamefont {Miguel}},\ }\href@noop {}
  {\bibfield  {journal} {\bibinfo  {journal} {J. Chem. Phys.}\ }\textbf
  {\bibinfo {volume} {126}},\ \bibinfo {pages} {014503} (\bibinfo {year}
  {2007})}\BibitemShut {NoStop}%
\bibitem [{\citenamefont {Alonso}\ \emph {et~al.}(2010)\citenamefont {Alonso},
  \citenamefont {Castro}, \citenamefont {Echenique}, \citenamefont {Polo},
  \citenamefont {Rubio},\ and\ \citenamefont {Zueco}}]{Alonso2010NJP}%
  \BibitemOpen
  \bibfield  {author} {\bibinfo {author} {\bibfnamefont {J.~L.}\ \bibnamefont
  {Alonso}}, \bibinfo {author} {\bibfnamefont {A.}~\bibnamefont {Castro}},
  \bibinfo {author} {\bibfnamefont {P.}~\bibnamefont {Echenique}}, \bibinfo
  {author} {\bibfnamefont {V.}~\bibnamefont {Polo}}, \bibinfo {author}
  {\bibfnamefont {V.}~\bibnamefont {Rubio}}, \ and\ \bibinfo {author}
  {\bibfnamefont {D.}~\bibnamefont {Zueco}},\ }\href@noop {} {\bibfield
  {journal} {\bibinfo  {journal} {N. J. Phys.}\ }\textbf {\bibinfo {volume}
  {12}},\ \bibinfo {pages} {083064} (\bibinfo {year} {2010})}\BibitemShut
  {NoStop}%
\bibitem [{\citenamefont {Alonso}\ \emph {et~al.}(2012)\citenamefont {Alonso},
  \citenamefont {Castro}, \citenamefont {Echenique},\ and\ \citenamefont
  {Rubio}}]{Alonso2012}%
  \BibitemOpen
  \bibfield  {author} {\bibinfo {author} {\bibfnamefont {J.}~\bibnamefont
  {Alonso}}, \bibinfo {author} {\bibfnamefont {A.}~\bibnamefont {Castro}},
  \bibinfo {author} {\bibfnamefont {P.}~\bibnamefont {Echenique}}, \ and\
  \bibinfo {author} {\bibfnamefont {A.}~\bibnamefont {Rubio}},\ }in\ \href@noop
  {} {\emph {\bibinfo {booktitle} {Fundamentals of Time-Dependent Density
  Functional Theory}}},\ \bibinfo {editor} {edited by\ \bibinfo {editor}
  {\bibfnamefont {M.}~\bibnamefont {Margues}} \emph {et~al.}}\ (\bibinfo
  {publisher} {Lecture Notes in Physics 837, Springer},\ \bibinfo {year}
  {2012})\BibitemShut {NoStop}%
\bibitem [{\citenamefont {Marx}\ and\ \citenamefont
  {Hutter}(2000)}]{Marx:2000}%
  \BibitemOpen
  \bibfield  {author} {\bibinfo {author} {\bibfnamefont {D.}~\bibnamefont
  {Marx}}\ and\ \bibinfo {author} {\bibfnamefont {J.}~\bibnamefont {Hutter}},\
  }\enquote {\bibinfo {title} {Modern methods and algorithms of quantum
  chemistry},}\ \ (\bibinfo  {publisher} {J{\"u}lich. John von Neumann
  Institute for Computing},\ \bibinfo {year} {2000})\ Chap.\ \bibinfo {chapter}
  {Ab initio molecular dynamics: Theory and implementation}, pp.\ \bibinfo
  {pages} {329--477}\BibitemShut {NoStop}%
\bibitem [{\citenamefont {Teilhaber}(1992)}]{Teilhaber:1992}%
  \BibitemOpen
  \bibfield  {author} {\bibinfo {author} {\bibfnamefont {J.}~\bibnamefont
  {Teilhaber}},\ }\href@noop {} {\bibfield  {journal} {\bibinfo  {journal}
  {Phys. Rev. B}\ }\textbf {\bibinfo {volume} {46}},\ \bibinfo {pages} {12990}
  (\bibinfo {year} {1992})}\BibitemShut {NoStop}%
\bibitem [{\citenamefont {Kalia}\ \emph {et~al.}(1990)\citenamefont {Kalia},
  \citenamefont {Vashishta}, \citenamefont {Yang}, \citenamefont {Dech},\ and\
  \citenamefont {Rowlan}}]{Kalia1990}%
  \BibitemOpen
  \bibfield  {author} {\bibinfo {author} {\bibfnamefont {R.~K.}\ \bibnamefont
  {Kalia}}, \bibinfo {author} {\bibfnamefont {P.}~\bibnamefont {Vashishta}},
  \bibinfo {author} {\bibfnamefont {L.~H.}\ \bibnamefont {Yang}}, \bibinfo
  {author} {\bibfnamefont {F.~W.}\ \bibnamefont {Dech}}, \ and\ \bibinfo
  {author} {\bibfnamefont {J.}~\bibnamefont {Rowlan}},\ }\href@noop {}
  {\bibfield  {journal} {\bibinfo  {journal} {Intl. J. Supercomp. Appl.}\
  }\textbf {\bibinfo {volume} {4}},\ \bibinfo {pages} {22} (\bibinfo {year}
  {1990})}\BibitemShut {NoStop}%
\bibitem [{\citenamefont {Meyer}\ and\ \citenamefont
  {Miller}(1979)}]{MeyMiller1979}%
  \BibitemOpen
  \bibfield  {author} {\bibinfo {author} {\bibfnamefont {H.}~\bibnamefont
  {Meyer}}\ and\ \bibinfo {author} {\bibfnamefont {W.~H.}\ \bibnamefont
  {Miller}},\ }\href@noop {} {\bibfield  {journal} {\bibinfo  {journal} {J.
  Chem Phys.}\ }\textbf {\bibinfo {volume} {70}},\ \bibinfo {pages} {3214}
  (\bibinfo {year} {1979})}\BibitemShut {NoStop}%
\bibitem [{\citenamefont {Kibble}(1979)}]{Kibble:1979}%
  \BibitemOpen
  \bibfield  {author} {\bibinfo {author} {\bibfnamefont {T.~W.~B.}\
  \bibnamefont {Kibble}},\ }\href@noop {} {\bibfield  {journal} {\bibinfo
  {journal} {Com. Math. Phys.}\ }\textbf {\bibinfo {volume} {65}},\ \bibinfo
  {pages} {189} (\bibinfo {year} {1979})}\BibitemShut {NoStop}%
\bibitem [{\citenamefont {Heslot}(1985)}]{Heslot:1985}%
  \BibitemOpen
  \bibfield  {author} {\bibinfo {author} {\bibfnamefont {A.}~\bibnamefont
  {Heslot}},\ }\href@noop {} {\bibfield  {journal} {\bibinfo  {journal} {Phys.
  Rev. D}\ }\textbf {\bibinfo {volume} {31}},\ \bibinfo {pages} {1341}
  (\bibinfo {year} {1985})}\BibitemShut {NoStop}%
\bibitem [{\citenamefont {Brody}\ and\ \citenamefont
  {Hughston}(2001)}]{brody:2001}%
  \BibitemOpen
  \bibfield  {author} {\bibinfo {author} {\bibfnamefont {D.~C.}\ \bibnamefont
  {Brody}}\ and\ \bibinfo {author} {\bibfnamefont {L.~P.}\ \bibnamefont
  {Hughston}},\ }\href@noop {} {\bibfield  {journal} {\bibinfo  {journal} {J.
  of Geom. and Phys.}\ }\textbf {\bibinfo {volume} {38}},\ \bibinfo {pages}
  {19} (\bibinfo {year} {2001})}\BibitemShut {NoStop}%
\bibitem [{\citenamefont {Cari{\~n}ena}, \citenamefont {Clemente-Gallardo},\
  and\ \citenamefont {Marmo.}(2007)}]{carinena:2007}%
  \BibitemOpen
  \bibfield  {author} {\bibinfo {author} {\bibfnamefont {J.~F.}\ \bibnamefont
  {Cari{\~n}ena}}, \bibinfo {author} {\bibfnamefont {J.}~\bibnamefont
  {Clemente-Gallardo}}, \ and\ \bibinfo {author} {\bibfnamefont
  {G.}~\bibnamefont {Marmo.}},\ }\href@noop {} {\bibfield  {journal} {\bibinfo
  {journal} {Theor. Math. Phys.}\ }\textbf {\bibinfo {volume} {152}},\ \bibinfo
  {pages} {894} (\bibinfo {year} {2007})}\BibitemShut {NoStop}%
\bibitem [{\citenamefont {{Cohen--Tannoudji}}, \citenamefont {{Lalo\"e}},\ and\
  \citenamefont {B.~Diu}(1977)}]{Cohen1977}%
  \BibitemOpen
  \bibfield  {author} {\bibinfo {author} {\bibfnamefont {C.}~\bibnamefont
  {{Cohen--Tannoudji}}}, \bibinfo {author} {\bibfnamefont {F.}~\bibnamefont
  {{Lalo\"e}}}, \ and\ \bibinfo {author} {\bibfnamefont {B.}~\bibnamefont
  {B.~Diu}},\ }\href@noop {} {\emph {\bibinfo {title} {Quantum Mechanics}}}\
  (\bibinfo  {publisher} {Hermann, Paris},\ \bibinfo {year} {1977})\BibitemShut
  {NoStop}%
\bibitem [{\citenamefont {Gleason}(1957)}]{Gleason1957}%
  \BibitemOpen
  \bibfield  {author} {\bibinfo {author} {\bibfnamefont {A.~M.}\ \bibnamefont
  {Gleason}},\ }\href@noop {} {\bibfield  {journal} {\bibinfo  {journal} {J. of
  Mathematics and Mechanics}\ }\textbf {\bibinfo {volume} {6}},\ \bibinfo
  {pages} {885} (\bibinfo {year} {1957})}\BibitemShut {NoStop}%
\bibitem [{\citenamefont {Breuer}\ and\ \citenamefont
  {Petruccione}(2002)}]{Breuer2002}%
  \BibitemOpen
  \bibfield  {author} {\bibinfo {author} {\bibfnamefont {H.~P.}\ \bibnamefont
  {Breuer}}\ and\ \bibinfo {author} {\bibfnamefont {F.}~\bibnamefont
  {Petruccione}},\ }\href@noop {} {\emph {\bibinfo {title} {The theory of open
  quantum systems}}}\ (\bibinfo  {publisher} {Oxford University Press},\
  \bibinfo {year} {2002})\BibitemShut {NoStop}%
\bibitem [{Note2()}]{Note2}%
  \BibitemOpen
  \bibinfo {note} {\label {foot:supp} The Mathematica notebook containing all
  the programs used in this section can be found as supplementary
  material.}\BibitemShut {Stop}%
\bibitem [{\citenamefont {Bittner}\ and\ \citenamefont
  {Rossky}(1997)}]{Bittner1997}%
  \BibitemOpen
  \bibfield  {author} {\bibinfo {author} {\bibfnamefont {E.~R.}\ \bibnamefont
  {Bittner}}\ and\ \bibinfo {author} {\bibfnamefont {P.~J.}\ \bibnamefont
  {Rossky}},\ }\href@noop {} {\bibfield  {journal} {\bibinfo  {journal} {J.
  Chem. Phys}\ }\textbf {\bibinfo {volume} {107}} (\bibinfo {year}
  {1997})}\BibitemShut {NoStop}%
\bibitem [{\citenamefont {Bittner}\ and\ \citenamefont
  {Rossky}(1995)}]{Bittner1995}%
  \BibitemOpen
  \bibfield  {author} {\bibinfo {author} {\bibfnamefont {E.~R.}\ \bibnamefont
  {Bittner}}\ and\ \bibinfo {author} {\bibfnamefont {P.~J.}\ \bibnamefont
  {Rossky}},\ }\href@noop {} {\bibfield  {journal} {\bibinfo  {journal} {The
  Journal of Chemical Physics}\ }\textbf {\bibinfo {volume} {103}},\ \bibinfo
  {pages} {8130} (\bibinfo {year} {1995})}\BibitemShut {NoStop}%
\end{thebibliography}

%

\end{document}